\documentclass[a4paper,11pt,reqno]{amsart}
          \usepackage{amssymb}
	  \usepackage{amsmath}
	  \usepackage{amsthm}
          \usepackage{amsfonts}
          \usepackage[english]{babel}
          \usepackage[utf8]{inputenc}
          \usepackage{enumitem}

\usepackage[margin=2.5cm]{geometry}

 \usepackage[unicode,colorlinks,plainpages=false,hyperindex=true,bookmarksnumbered=true,bookmarksopen=false,pdfpagelabels]{hyperref}
 \hypersetup{urlcolor=cyan,linkcolor=blue,citecolor=red,colorlinks=true}
\newtheorem{thm}{Theorem}[section]
\newtheorem{cor}[thm]{Corollary}
\newtheorem{lem}[thm]{Lemma}
\newtheorem{prop}[thm]{Proposition}

\newtheorem{defn}[thm]{Definition}

\theoremstyle{definition}
\newtheorem{rem}[thm]{Remark}
\newtheorem*{conv*}{Conventions}
\newtheorem*{conte*}{Content}

\numberwithin{equation}{section}

\newcommand{\be}{\begin{equation}}
\newcommand{\ee}{\end{equation}}
\newcommand{\bea}{\begin{eqnarray}}
\newcommand{\eea}{\end{eqnarray}}
\newcommand{\ba}{\begin{aligned}}
\newcommand{\ea}{\end{aligned}}

 \usepackage{color}
\definecolor{lila}{rgb}{1,0.2,0.9}
\definecolor{purple}{rgb}{0.4,0,1}
\definecolor{darkgreen}{rgb}{0,0.5,0}

\def\span{\mathrm{span}}                    %
\def\cH{{\mathcal H}}                       %
\def\ri{{\rm i}}                            %
\def\bC{\mathbb{C}}                         %
\def\cF{{\mathcal F}}                       %
\def\dt {\left.\frac{d}{dt}\right|_{t=0}}   %
\def\rank{{\mathrm{rank}}}                  %
\def\Ad{\mathrm{Ad}}                        %
\def\cU{\mathcal{U}}                        %
\def\bR{\mathbb{R}}                         %
\def\fH{\mathfrak{H}}                       %
\def\fF{\mathfrak{F}}                       %
\def\ft{\mathfrak{t}}                       %
\def\fg{\mathfrak{g}}                       %
\def\ddim{{\mathrm{ddim}}}                  %
\def\cV{\mathcal{V}}                        %
\def\cJ{\mathcal{J}}                        %
\def\reg{{\mathrm{reg}}}                    %
\def\dind{{\mathrm{dind}}}                  %
\def\cA{\mathcal{A}}                        %
\def\reg{\mathrm{reg}}                      %
\def\ann{\mathrm{ann}}                      %
\def\Im{\mathrm{Im}}

\begin{document}

\title{Collective superintegrable systems
\\
 from the Guillemin--Sternberg torus action}

\maketitle

\begin{center}

L. Feh\'er${}^{a,b}$

\medskip
${}^a$Department of Theoretical Physics, University of Szeged\\
Tisza Lajos krt 84-86, H-6720 Szeged, Hungary\\
e-mail: lfeher@physx.u-szeged.hu

\medskip
${}^b$Institute for Particle and Nuclear Physics\\
Hun-Ren Wigner Research Centre for Physics\\
 H-1525 Budapest, P.O.B.~49, Hungary

\end{center}

\begin{abstract}
We present a novel approach to the superintegrability of  collective Hamiltonians  invariant under
a Hamiltonian action of  a
connected
semisimple compact Lie group, $G$,
on a symplectic manifold, $M$.
By exploiting a Hamiltonian torus action that goes back to
Guillemin and Sternberg~[GS,1983], we demonstrate
that the functional dimensions of $\fH := \cJ^*(C^\infty(\fg^*)^G)$, where $\cJ: M \to \fg^*$ is the momentum map of the $G$ action,
and its  centralizer $\fF$ in $C^\infty(M)$
satisfy the equality $\ddim(\fH) + \ddim(\fF) = \dim(M)$.
Together with a non-triviality condition, this ensures that the Abelian Poisson algebra
$\fH\subset C^\infty(M)$ represents a superintegrable system, and
 it also follows that the momentum map of the GS torus action
yields  action variables for the system.
Our work provides a new insight into collective superintegrability
complementing earlier results of Bolsinov~and~Jovanovi\'{c}.
\end{abstract}

\setcounter{tocdepth}{2}

\bigskip

\tableofcontents

\newpage

 \section{Introduction}
 \label{sec1}

The study of classical and quantum superintegrability is motivated by the existence of many interesting examples and connections
to several areas of physics and mathematics. See, for example, the pioneering papers \cite{MF,Nek} and the reviews \cite{J,MPW,R}.
Here, we will be concerned with classical superintegrable systems on symplectic manifolds, adopting the following definition.

\begin{defn}\label{defnI}
A superintegrable system on a $C^\infty$ symplectic manifold $(M, \omega)$ is given by an Abelian Poisson subalgebra $\fH$ of $C^\infty(M)$ such that
all Hamiltonians $\cH\in \fH$ possess complete flows and the two relations
\be
1\leq \ddim(\fH) <  \dim(M)/2
\label{I1}\ee
and
\be
\ddim(\fH) + \ddim(\fF) = \dim(M)
\label{I2}\ee
hold for the functional dimensions of $\fH$ and its centralizer
 \be
\fF := \{ \cF\in C^\infty(M)\mid \{\cF,\cH\}=0,\,\, \forall \cH \in \fH\}
\label{I3}\ee
in the Poisson algebra $(C^\infty(M), \{-,-\})$ associated with the symplectic structure.
\end{defn}

It is implicit in the definition that $\fH$ and $\fF$ have well-defined functional
dimensions, that is,  the exterior derivatives of their elements
span subspaces of $T_p^*M$  of  dimension  $\ddim(\fH)$ and $\ddim(\fF)$, respectively,
for every point $p$ from a dense subset of $M$.
Locally around generic points, it means that
 $\fH$ is functionally generated by
$\ell := \ddim(\fH)$  independent Hamiltonians $\cH_1,\dots, \cH_\ell$ in involution
that admit $m:= \dim(M) - 2 \ell$  further joint constants of motion $\cF_1,\dots, \cF_m$ so that the
$(\ell + m)$ functions $\cH_1,\dots, \cH_\ell, \cF_1, \dots, \cF_m$ are independent and
functionally  generate $\fF$.
 The integer $\ell$ is  called the rank of the system.
 If the generic level surfaces of $\fF$ are compact, then their connected components are isotropic tori of dimension
  $\ell$, on which every $\cH\in \fH$ has quasi-periodic trajectories.
 Thus, the trajectories of the Hamiltonians $\cH\in \fH$ of superintegrable systems are more
 restricted than those of Liouville integrable systems, which can be defined similarly
 but taking $\ddim(\fH) = \dim(M)/2$, in which case the equality \eqref{I2} is automatic.

 It is well known that
 every effective
 Hamiltonian action of a torus ${\mathrm{U}}(1)^\ell$ on a connected symplectic manifold $M$ with $\dim(M) > 2\ell$
  gives rise
 to a superintegrable  system \cite{Z}.  In fact, taking $\fH$ to be the Abelian Poisson algebra
 generated by the components of the momentum map  one can show by using some invariant theory
 that $\ddim(\fH) + \ddim(\fF) = \dim(M)$ holds for the centralizer.

The first  goal of the current paper is to present a novel approach to the generalization of the above statement concerning torus actions
for non-Abelian compact Lie groups.
For this purpose, we shall consider  a connected symplectic manifold $(M,\omega)$
 equipped with a Hamiltonian action of a connected compact Lie group $G$ associated with a
 simple (or reductive)
 Lie algebra $\fg$.
By using the corresponding equivariant  momentum map \cite{OR}
\be
\cJ: M\to \fg^*,
\label{I4}\ee
we introduce
\be
\fH:= \cJ^*\left( C^\infty(\fg^*)^G\right).
\label{I5}\ee
Here,  $C^\infty(\fg^*)^G$ stands for the coadjoint invariants, i.e., the center of the Lie--Poisson bracket on $C^\infty(\fg^*)$.
The integral curves of the elements of $\fH$ are orbits of one-parameter subgroups of $G$, giving complete flows \cite{GS1}.
We shall demonstrate that \emph{the Abelian Poisson algebra $\fH$ \eqref{I5} and its centralizer $\fF$ \eqref{I3}  always satisfy
the crucial equality \eqref{I2}.}
Excluding the coadjoint orbits of $G$ for which $\ddim(\fH)=0$,
this general construction gives a superintegrable  system in the sense of definition \ref{defnI}.
It can be applied, for instance, to the cotangent bundle $T^*Q$ of any $G$-manifold $Q$,  homogeneous spaces $Q=G/K$ being
important special cases.

We refer to the superintegrability  enjoyed by  $\fH$ \eqref{I5} as `collective superintegrabiliy'.
This terminology originates from the seminal work of Guillemin and Sternberg \cite{GS1,GS2,GS3}
who called the functions of the form $h \circ \cJ$ with $h\in C^\infty(\fg^*)$ `collective Hamiltonians'.

Our proof of collective superintegrability will be based on the `Guillemin--Sternberg (GS) torus action'
that can be constructed
on a dense open $G$-invariant submanifold of every Hamiltonian $G$-manifold, where it commutes with
the original $G$ action.
Such a torus action  was first introduced in \cite{GS3} assuming that the generic coadjoint orbits in $\cJ(M)\subset \fg^*$
are regular orbits $G/T$, with $T$ being a maximal torus.
In the general case, which we learned from the papers \cite{Lane,W}, the pertinent torus is given by the center of the
isotropy group
of the intersection  of a generic orbit of $G$ in $\cJ(M)$ with the
 fundamental Weyl chamber\footnote{This torus is denoted $T_\sigma$ below,
where $\sigma$ is the `principal open face' of the closed Weyl chamber $\ft_+$, for which
$\cJ^{-1}(\Ad_G(\sigma))$ is a dense open subset of $M$.
In  \cite{GS3} $\sigma$ was  the interior $\ft_+^\circ$, and then $T_\sigma = T$.}.
The GS torus action is Hamiltonian, its momentum map is obtained from $\cJ$ by
transforming it into the fundamental Weyl chamber via the coadjoint action of $G$.
 Combining it with Thimm's chopping trick \cite{GS2,T},
it was used in \cite{GS3} for obtaining  action variables of the classical Gelfand-Cetlin systems.
Here, we will show that \emph{it directly yields action variables for the collective superintegrable systems}
defined by $\fH$ \eqref{I5}.

It is a heuristic principle that torus actions lurk behind many features of integrable systems \cite{Z}.
Our second goal in this paper has been to shed  light on the role of the GS torus action
in collective superintegrability, thereby illustrating the validity of the principle.

The superintegrability of collective Hamiltonians has been studied before by Bolsinov and Jovanovi\'{c} \cite{BJ},
who worked in a more general framework, allowing for proper Hamiltonian actions of non-compact Lie groups too.
In the paper \cite{BJ} and subsequent publications (see the review \cite{J}) they also gave several interesting applications.
Our proof of collective superintegrability amounts to a different approach, which is available in a subset
of the cases covered by \cite{BJ}.  However, our approach has the advantage of directly producing the action variables as well.

\begin{conte*}
In Section \ref{sec2} we gather together the necessary  mathematical results
behind the GS torus action, which is described in Subsection \ref{subsec23}.
No originality is claimed here, although we also expound some technical details that we could not
find in the literature.
Then, we prove Theorem \ref{main} about collective superintegrability in Subsection \ref{subsec31},
which is  our main technical result, and
 present the related Theorem \ref{thmaction}  on generalized action-angle coordinates in Subsection  \ref{subsec32}.
A comparison with the results  of Bolsinov and Jovanovi\'{c} \cite{BJ} is given in Subsection \ref{subsec33}.
We conclude in Section \ref{sec4}, where we also discuss  potential future applications and a natural combination
of Theorem \ref{main} with Thimm's trick \cite{T}.
\end{conte*}

\section{Preparations}
\label{sec2}

In this section we collect useful mathematical results  from the literature \cite{DK,HNP,Lane,Mi,OR}.
We below assume  that $G$ is a connected compact Lie group with simple Lie algebra $\fg$,
but straightforward  modifications of the results hold for semisimple and reductive compact Lie groups,  too.

\subsection{Properties of the fundamental Weyl chamber}
\label{subsec21}

It is convenient to realize $\fg$ as a real form of a complex simple Lie algebra $\fg_\bC$.
Then, a Cartan subalgebra $\ \ft_\bC$  of $\fg_\bC$ corresponds to the Lie algebra $\ft$ of a fixed maximal torus $T<G$.
The dual of the `real Cartan subalgebra' $ \ri \ft\in \fg_\bC$ contains the set of roots $\Phi$ in which
we choose a base $\Delta$.  Using a Weyl-Chevalley basis \cite{DK,Sam}
 of $\fg_\bC$,  given by the co-roots $h_{\alpha_j}\in \ri \ft$ associated with the simple roots $\alpha_j \in \Delta$  and  `root vectors'
$e_{\pm \alpha}$ for $\alpha \in \Phi_+$,
 we can present $\fg$ as
\be
\fg = \span_\bR \{ \ri h_{\alpha_j}, \, e_\alpha^\pm \mid j=1,\dots, r,\,\, \alpha \in \Phi_+\}, \qquad r= \rank(\fg),
\label{T1}\ee
where
\be
e_\alpha^+ := \ri (e_\alpha + e_{-\alpha}),
\qquad
 e_\alpha^-:= (e_\alpha - e_{-\alpha}).
\label{T2} \ee

We identify $\fg^*$ and $\ft^*$ with $\fg$ and $\ft$, respectively, using the Killing form $\langle -, - \rangle$
normalized so that the long roots are of length $\sqrt{2}$.
As a result, the adjoint and coadjoint actions of $G$ are also identified, and will be denoted as $\Ad$.

We introduce the closed fundamental Weyl chamber $\ft_+$  by
\be
\ft_+ := \{ X \in \ft \mid \alpha_j(-\ri X) \geq 0,\, \forall j=1,\dots, r\}.
 \label{T3}\ee
Every subset $S$ of the simple roots determines a face $F_S$ of $\ft_+$ by setting
\be
F_S = \{ X \in \ft_+ \mid \alpha (X) = 0, \,\, \forall \alpha \in S \}.
\label{T4}\ee
The walls (facets) are associated with one-point subsets, i.e., with simple roots.
We let $F_S^\circ$ denote the relative interior of $F_S$, also called algebraic interior \cite{HNP},
\be
F_S^\circ = \{ X\in F_S \mid \alpha(-\ri X) >0,\,\, \forall \alpha \in \Delta \setminus S\}.
\label{T5}\ee
From now on, we refer to $F_S^\circ$ as an `open face'.
In this notation, the open Weyl chamber $\ft_+^\circ$ can be written as $F^\circ_\emptyset$.
The boundary $\partial \ft_+$ decomposes as the disjoint union of the open faces for which $S\neq \emptyset$.

The adjoint isotropy group of every $X \in \fg$ is connected.
Any two elements of an open face $\sigma=F_S^\circ$ have the same adjoint isotropy group,
which we denote by $G_\sigma$. Its Lie algebra $\fg_\sigma$ is generated, as a Lie algebra, by $\ft$
and the elements $e^\pm_\alpha$ for $\alpha \in S$. Equivalently, as a vector space:
\be
\fg_\sigma = \ft + \span_\bR \{ e_\alpha^\pm \mid \alpha \in \Phi_+,\, \alpha(X) = 0, \, \forall X \in \sigma\}.
\label{T6}\ee
For the center $\zeta(\fg_\sigma)$ we have the equality
\be
\zeta(\fg_\sigma) = \span_\bR (\sigma) =: \ft_\sigma,
\label{T7}\ee
and we let $T_\sigma$ be the  subgroup of $T$ with Lie algebra $\ft_\sigma$.
Note that
\be
\dim (T_\sigma) = \dim(T) - \vert S \vert ,
\label{T8}\ee
where $\vert S \vert$ is the number of elements of $S$. In particular, $G_\sigma = T$ for  $\sigma = \ft_+^\circ$.
For the properties summarized in this paragraph and in the next one, we refer to \cite{DK,HNP}.

The Weyl chamber $\ft_+$ is a fundamental domain of the Weyl group $W = N_G(T)/T$ acting on $\ft$.
The isotropy group $W_X$ of $X\in \sigma = F_S^\circ$ is constant along the open face, and thus it can be denoted by $W_\sigma$.
It is generated by the reflections in the simple roots $\alpha \in S$.
The elements of $\ft$ fixed by $W_\sigma$ are precisely the elements of $\ft_\sigma$.

\subsection{Invariant functions and the map to representatives}
\label{subsec22}

Consider a real $C^\infty$ function $F$ on an open subset $\cV \subset \fg$ and define its gradient $\nabla F\in C^\infty(\cV, \fg)$ by
\be
\langle \nabla F(Y_0), Y \rangle = \dt F(Y_0+ t Y), \qquad \forall Y_0\in \cV,\, \forall Y\in \fg.
\label{T9}\ee
Let $\zeta(\fg_{Y_0})$ denote the center of the centralizer  $\fg_{Y_0} < \fg$ of $Y_0$.

\begin{lem}\label{lemT1}
The function $F\in C^\infty(\cV)$ is invariant with respect to the infinitesimal action of $G$ on $\fg$,
i.e.~$F\in C^\infty(\cV)^\fg$,
 if and only if
$\nabla F(Y_0) \in \zeta(\fg_{Y_0})$ holds for every $Y_0 \in \cV$.
\end{lem}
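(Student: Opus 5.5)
The plan is to translate infinitesimal invariance into a pointwise condition on the gradient, and then use a basic fact about the centralizer $\fg_{Y_0}$ of an element of a compact Lie algebra.

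\emph{Invariance as a gradient condition.} Invariance under the infinitesimal action means that for every $Z\in\fg$ the derivative of $F$ along the vector field $Y\mapsto [Z,Y]$ vanishes. Concretely, I will use the condition
\be
\langle \nabla F(Y_0), [Z,Y_0]\rangle = 0, \qquad \forall Z\in \fg,\,\, \forall Y_0\in\cV.
\ee
By the ad-invariance of the Killing form this reads $\langle [Y_0,\nabla F(Y_0)], Z\rangle = 0$ for all $Z$. Since the form is nondegenerate, it is equivalent to $[Y_0,\nabla F(Y_0)]=0$, i.e.\ $\nabla F(Y_0)\in \fg_{Y_0}$.

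\emph{The "if" direction.} This is now immediate: if $\nabla F(Y_0)$ lies in $\zeta(\fg_{Y_0})\subset \fg_{Y_0}$, then it lies in $\fg_{Y_0}$, which is the invariance condition above.

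\emph{The "only if" direction.} Here one must upgrade $\nabla F(Y_0)\in\fg_{Y_0}$ to $\nabla F(Y_0)\in\zeta(\fg_{Y_0})$, and I expect this to be the main step. I would exploit invariance at nearby points. Fix $Y_0\in\cV$ and $X\in\fg_{Y_0}$. For small $t$ the point $Y_0+tX$ lies in $\cV$, so invariance there gives
\be
[Y_0+tX,\nabla F(Y_0+tX)]=0.
\ee
Differentiating at $t=0$ yields
\be
[X,\nabla F(Y_0)] + [Y_0, D] = 0, \qquad D:=\dt \nabla F(Y_0+tX).
\ee
Now pair this identity with an arbitrary $W\in \fg_{Y_0}$ using the Killing form. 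The term involving $D$ drops out, because
\be
\langle [Y_0,D],W\rangle = -\langle D,[Y_0,W]\rangle = 0.
\ee
What remains is $\langle [X,\nabla F(Y_0)],W\rangle=0$ for all $W\in\fg_{Y_0}$. Note that $[X,\nabla F(Y_0)]$ itself lies in $\fg_{Y_0}$, since both $X$ and $\nabla F(Y_0)$ do. Because $\fg$ is compact, the Killing form is definite, so its restriction to $\fg_{Y_0}$ is nondegenerate. Hence $[X,\nabla F(Y_0)]=0$. As $X\in\fg_{Y_0}$ was arbitrary, this gives $\nabla F(Y_0)\in\zeta(\fg_{Y_0})$.

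\emph{Alternative route.} An alternative for the last step is to conjugate $Y_0$ into $\ft$ and use the description \eqref{T6}–\eqref{T7} of $\fg_\sigma$. The direct argument above avoids that conjugation and seems cleaner.
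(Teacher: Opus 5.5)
Your proof is correct, but it gets from $\nabla F(Y_0)\in\fg_{Y_0}$ to $\nabla F(Y_0)\in\zeta(\fg_{Y_0})$ by a different mechanism than the paper.

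The paper first uses invariance under the small group elements $e^{\tau Z}$. From this it derives the equivariance $\nabla F(\Ad_{e^{\tau Z}}Y_0)=\Ad_{e^{\tau Z}}(\nabla F(Y_0))$ of the gradient. It then takes $Z\in\fg_{Y_0}$: since $e^{\tau Z}$ fixes $Y_0$, it must also fix $\nabla F(Y_0)$, so $[Z,\nabla F(Y_0)]=0$.

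You stay purely infinitesimal. You differentiate $[Y,\nabla F(Y)]=0$ along $Y_0+tX$ with $X\in\fg_{Y_0}$. You then remove the unknown term $D$ using the orthogonality of $[Y_0,\fg]$ and $\fg_{Y_0}$, together with definiteness of the Killing form. In effect you use $\fg_{Y_0}\cap[Y_0,\fg]=\{0\}$.

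\textbf{What your route buys.} It never needs to pass from infinitesimal invariance to local invariance under $\Ad_{e^{\tau Z}}$, a step the paper uses without comment.

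\textbf{What it costs.} It genuinely relies on compactness. The paper's equivariance argument needs only a nondegenerate invariant form; its infinitesimal form $D(\nabla F)(Y_0)[Z,Y_0]=[Z,\nabla F(Y_0)]$ follows from the symmetry of the Hessian. So it would also apply, for example, at nilpotent elements of a non-compact semisimple Lie algebra. There $\fg_{Y_0}\cap[Y_0,\fg]\neq\{0\}$ and your cancellation step would fail.

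In the compact setting of the paper, both arguments are complete.
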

\begin{proof}
The invariance condition on $F$ means that
\be
\langle \nabla F(Y_0), [Y, Y_0] \rangle = \langle [Y_0, \nabla F(Y_0)], Y \rangle = 0, \qquad \forall Y_0\in \cV,\, \forall Y\in \fg.
\label{T10}\ee
Thus, we obtain  $\nabla F(Y_0) \in \fg_{Y_0}$.
Taking any $Z,Y\in \fg$ and arbitrary but small enough real $t$ and $\tau$, it follows that
\be
F(\Ad_{e^{\tau Z}}(Y_0) + t Y ) = F (Y_0 +t \Ad_{e^{-\tau Z}}( Y) ),
\label{T11}\ee
which implies  the equality
\be
\nabla F(\Ad_{e^{\tau Z}} Y_0 ) =  \Ad_{e^{\tau Z}}(\nabla F(Y_0)).
\label{T12}\ee
Then, taking $Z \in \fg_{Y_0}$, we obtain that $\nabla F(Y_0) \in \zeta(\fg_{Y_0})$ for all $Y_0\in \cV$;
and if this holds then $F\in C^\infty(\cV)^\fg$ is obviously valid.
\end{proof}

It is well known that the gradients of the $G$-invariant $C^\infty$ functions on $\fg$, and also the gradients
of the invariant polynomials,  span $\zeta(\fg_{Y_0})$ at every $Y_{0}\in \fg$.

Let us recall \cite{DK} that $\ft_+$ is not only a fundamental domain for the $W$ action on $\ft$, but also for the adjoint action of $G$.
For any open face $\sigma=F_S^\circ$,  we define
\be
\Sigma_\sigma= G\cdot \sigma :=  \{ \Ad_g(X)  \mid X \in \sigma,\, g\in G\}.
\label{T13}\ee
Since the isotropy group is constant along $\sigma$, $\Sigma_\sigma$ is a connected component of an orbit type submanifold
for the adjoint action, and thus it is an embedded submanifold \cite{OR}.
Being a union of (co)adjoint orbits,
it is a Poisson submanifold as well with respect to the Lie--Poisson bracket on $\fg\simeq \fg^*$.
The closure of this submanifold is $\bar \Sigma_\sigma = G \cdot \bar \sigma$.

Let us consider the map to representatives
\be
\Psi: \fg \to \ft_+, \qquad \Psi(Y) := \Ad_G(Y) \cap \ft_+,
\label{T14}\ee
which is sometimes called the `sweeping map' \cite{Lane}.  We shall also use the restricted maps
\be
\Psi_\sigma:= \Psi\vert_{\Sigma_\sigma}: \Sigma_\sigma \to \sigma,
\label{T15}\ee
and
\be
\psi: \ft \to \ft_+, \qquad \psi(X):= W\cdot X \cap \ft_+.
\label{T16}\ee

Now we recall an important Lie theoretic result \cite{DK,HNP,Mi}.

\begin{thm}\label{thmT2}
The map $\Psi$ \eqref{T14} is continuous and descends to a homeomorphism from $\fg/G$ onto $\ft_+ = \ft/W$.
Its restriction $\Psi_\sigma: \Sigma_{\sigma} \to \sigma$ is $C^\infty$ for every open face.
\end{thm}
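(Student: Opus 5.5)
The plan is to prove the three assertions in order: well-definedness and continuity, then the homeomorphism, then smoothness on each stratum. Throughout I take as given the fact recalled above, that $\ft_+$ is a fundamental domain for the adjoint action. This means every orbit $\Ad_G(Y)$ meets $\ft_+$ in exactly one point, so $\Psi$ is well defined and $G$-invariant.

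\emph{Continuity.} I would use a compactness argument rather than invariant polynomials. Let $Y_n\to Y$ and write $\Psi(Y_n)=\Ad_{g_n}(Y_n)$. Since $\langle-,-\rangle$ is $\Ad$-invariant, $\Vert\Psi(Y_n)\Vert=\Vert Y_n\Vert$ is bounded. Take any subsequence. By compactness of $G$, pass to a further subsequence with $g_n\to g$. Then $\Psi(Y_n)\to \Ad_g(Y)$, and this limit lies in $\ft_+$ because $\ft_+$ is closed. By uniqueness of the representative, the limit equals $\Psi(Y)$. Since every subsequence has a sub-subsequence converging to $\Psi(Y)$, we get $\Psi(Y_n)\to\Psi(Y)$.

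\emph{Homeomorphism.} Being $G$-invariant and continuous, $\Psi$ descends to a continuous map $\bar\Psi:\fg/G\to\ft_+$, which is a bijection. Its inverse is the composition of the inclusion $\ft_+\hookrightarrow\fg$ with the quotient map $\fg\to\fg/G$, so it is continuous. The identification $\ft_+=\ft/W$ follows from the same argument applied to $\psi$, using that $\ft_+$ is a fundamental domain for $W$ and that $W$ is finite.

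\emph{Smoothness on $\Sigma_\sigma$.} Consider
$$\Phi:G/G_\sigma\times\sigma\to\Sigma_\sigma,\qquad \Phi(gG_\sigma,X)=\Ad_g(X).$$
It is well defined because $G_\sigma$ fixes every $X\in\sigma$, it is smooth, and it is surjective by definition of $\Sigma_\sigma$. It is injective because the representative $X=\Psi(\Ad_g X)$ is unique and the isotropy group of $X$ is exactly $G_\sigma$.

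To see that $\Phi$ is an immersion, by equivariance it suffices to check the differential at $(eG_\sigma,X)$. Let $\fg=\fg_\sigma\oplus\mathfrak m$ be the orthogonal decomposition. The differential is $(Z,V)\mapsto[Z,X]+V$ with $Z\in\mathfrak m$ and $V\in\ft_\sigma$. Here $\mathrm{ad}_X$ preserves $\mathfrak m$ and is invertible on it, so $[Z,X]\in\mathfrak m$ and vanishes only if $Z=0$. On the other hand $V\in\ft_\sigma\subset\fg_\sigma$ by \eqref{T7}. Hence the differential is injective.

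Next, $\Phi^{-1}$ is continuous. Its second component is $\Psi_\sigma$, already known to be continuous. For the first component, take $Y_n\to Y$ in $\Sigma_\sigma$ with $Y_n=\Ad_{g_n}X_n$. Any limit point $g$ of $(g_n)$ satisfies $Y=\Ad_g\Psi(Y)$, which determines $gG_\sigma$ uniquely, so compactness gives convergence of $g_nG_\sigma$. Thus $\Phi$ is an injective immersion that is a homeomorphism onto $\Sigma_\sigma$.

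Since $\Sigma_\sigma$ is an embedded submanifold, $\Phi$ is a diffeomorphism onto it. To justify this, note that an injective immersion which is a homeomorphism onto an embedded submanifold is a diffeomorphism onto it: it is smooth into $\Sigma_\sigma$, an immersion there, and a bijective immersion between manifolds has equal dimensions, hence is a local diffeomorphism. Therefore $\Psi_\sigma=\mathrm{pr}_2\circ\Phi^{-1}$ is $C^\infty$.

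The main obstacle I expect is this last step: identifying the manifold structure of $\Phi$'s image with the embedded structure of $\Sigma_\sigma$, i.e.\ the dimension match. If the homeomorphism argument turns out delicate, I would fall back on the slice theorem, using $\sigma$ as a slice at points of $\sigma$ for the $G$-action on the orbit-type stratum.
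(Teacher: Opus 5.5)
Your proof is correct. The paper does not prove Theorem~\ref{thmT2} itself. It quotes the result from \cite{DK,HNP,Mi} and only indicates, in Remark~\ref{remT3}, that $\Sigma_\sigma$ is $G$-equivariantly diffeomorphic to $G/G_\sigma\times\sigma$, so that $\Psi_\sigma$ becomes the projection onto the second factor.

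Your map $\Phi$ is exactly this trivialization. For the smoothness claim you therefore follow the route the paper has in mind, but you supply the details:
\begin{itemize}
\item injectivity comes from uniqueness of representatives together with $G_X=G_\sigma$;
\item the immersion property comes from $\mathrm{ad}_X$ being invertible on $\fg_\sigma^\perp$ while $\ft_\sigma\subset\fg_\sigma$;
\item continuity of $\Phi^{-1}$ comes from compactness of $G$.
\end{itemize}
For the topological statements you replace the citations with an elementary subsequence-and-compactness argument, which is also fine.

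What your version gains is self-containedness. As a by-product it shows that $\Sigma_\sigma$ is an embedded submanifold without using orbit-type theory \cite{OR}. The step you flagged as the main obstacle is in fact unnecessary. An injective immersion that is a homeomorphism onto its image is a smooth embedding. Its image is automatically an embedded submanifold, and $\Phi$ is a diffeomorphism onto it. So neither the a priori embeddedness of $\Sigma_\sigma$ nor the Sard-type fact behind ``bijective immersions have equal dimensions'' is needed.

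What the cited literature gives in addition is the $C^\infty$ form of the Chevalley restriction isomorphism. The paper relies on that later, in Lemma~\ref{lemT7}. Your argument does not cover it, but it is not needed for this theorem.
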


\begin{rem}\label{remT3}
The quotient topology of $\ft/W$ coincides with the topology of $\ft_+$ as a polyhedral cone.
The manifold structure of $\sigma$ is induced by its being
an open subset of an affine subspace of $\ft$ of dimension $\dim(T_\sigma)$ \eqref{T8}.
For the second part, note that
the manifold $\Sigma_\sigma$ is equivariantly diffeomorphic to the trivial bundle $G/G_\sigma \times \sigma$
with $G$ acting non-trivially on the first factor.
The second part
is usually stated for the set of regular elements $\fg_\reg = \Sigma_{\ft_+^\circ}$ only.
It is also worth noting that the restricted map from $\bar \Sigma_\sigma$ to $\bar \sigma$
is continuous for every open face $\sigma$.
\end{rem}

Define the globally continuous real functions $\Psi_Z: \fg \to \bR$ and $\psi_Z: \ft \to \bR$ by
\be
\Psi_Z := \langle \Psi, Z \rangle
\quad \hbox{and}\quad  \psi_Z:= \langle \psi, Z \rangle, \qquad \forall Z\in \ft,
\label{T17}\ee
which are obviously $G$-invariant and $W$-invariant, respectively.
 In addition to the above mentioned smoothness properties, we need some further details
 about these functions. To establish them, we start with two simple lemmas.

\begin{lem}\label{lemT4}
Pick $X_0\in   \partial \ft_+$.
Then, there exists a $W_{X_0}$ invariant open neighbourhood $U_0 \subset \ft$ around  $X_0$ such that if
$w X\in \ft_+$ holds for some $X\in U_0$ and $w\in W$, then  $w \in W_{X_0}$.
\end{lem}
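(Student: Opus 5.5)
The plan is to use the finiteness of $W$ together with the fact that $\ft_+$ is closed. For each $w \in W \setminus W_{X_0}$ we have $wX_0 \neq X_0$. Since $X_0 \in \ft_+$ and $\ft_+$ is a fundamental domain for $W$, the only element of the orbit $W\cdot X_0$ lying in $\ft_+$ is $X_0$ itself. Hence $wX_0 \notin \ft_+$ for all such $w$. Note that if $wX_0\in\ft_+$, then $wX_0=X_0$, so $w\in W_{X_0}$.

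Because $\ft_+$ is closed in $\ft$ and $w$ acts as a linear homeomorphism, the set $w^{-1}(\ft \setminus \ft_+)$ is an open set containing $X_0$. Define
\[
U_1 := \bigcap_{w\in W\setminus W_{X_0}} w^{-1}(\ft\setminus \ft_+).
\]
This is a finite intersection, so $U_1$ is an open neighbourhood of $X_0$. By construction, if $X \in U_1$ and $wX \in \ft_+$, then $w \in W_{X_0}$.

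To make the neighbourhood $W_{X_0}$-invariant, set
\[
U_0 := \bigcap_{v\in W_{X_0}} v(U_1).
\]
This is again a finite intersection of open sets, each containing $vX_0 = X_0$, so $U_0$ is open and contains $X_0$. It is $W_{X_0}$-invariant, because applying $u \in W_{X_0}$ merely permutes the sets $v(U_1)$. Since $U_0 \subset U_1$, the required property holds on $U_0$.

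Alternatively, one could take a small ball $B_\epsilon(X_0)$ with respect to the $W$-invariant Killing form metric. Such a ball is automatically $W_{X_0}$-invariant, and one chooses $\epsilon$ smaller than the distance from $X_0$ to the closed sets $w^{-1}(\ft_+)$ for $w\notin W_{X_0}$.

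No step is a real obstacle. The only point needing care is the fact that $\ft_+$ meets each $W$-orbit exactly once, which the paper takes from \cite{DK} as the statement that $\ft_+$ is a fundamental domain. The hypothesis $X_0\in\partial\ft_+$ is not needed; it simply indicates that the lemma is nontrivial only there, since for interior points $W_{X_0}$ is trivial.
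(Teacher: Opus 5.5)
Your proof is correct and is essentially the paper's: both rest on the fact that $wX_0\notin\ft_+$ for $w\notin W_{X_0}$, together with closedness of $\ft_+$ and finiteness of $W$, and your alternative (a small ball, $W_{X_0}$-invariant because $W$ acts orthogonally) is exactly the paper's argument. Your symmetrization step is harmless but redundant, since $U_1$ is already $W_{X_0}$-invariant ($w\mapsto wu^{-1}$ permutes $W\setminus W_{X_0}$ for $u\in W_{X_0}$); also, when $W_{X_0}=W$ (e.g.\ $X_0=0$) the empty intersection should be read as $U_1=\ft$.
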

\begin{proof}
 Any $ w$ not fixing $X_0$ maps it
outside the closed Weyl chamber.
Consequently, if  $U_0$ is a small enough open ball centered on $X_0$,   then $w U_0 \cap \ft_+ =\emptyset$ for each
$w\notin W_{X_0}$.  It follows that $U_0$ has the
claimed property: If  $X \in U_0$ and $w X \in \ft_+$, then $w \in W_{X_0}$.
The $W_{X_0}$ invariance  of $U_0$ holds since the action of the Weyl group is orthogonal on $\ft$ with respect
to the Euclidean structure
defined by the opposite of the Killing form.
\end{proof}

\begin{rem}\label{remT5}
It is worth noting that the open set $U_0$ of Lemma \ref{lemT4} cannot intersect the boundary $\partial\sigma$, since
for $X_0' \in \bar \sigma \setminus \sigma$ the group $W_{X_0}$ is a proper subgroup of $W_{X_0'}$.
\end{rem}

 \begin{lem}\label{lemT6}
 Pick $X_0 \in  \sigma $ from an open face $\sigma \subset \partial \ft_+$. Then,
 there exists a $W$-invariant open neighbourhood $U\subset \ft$ of $X_0$ on which $\psi_Z$
 is $C^\infty$ for every $Z\in\ft_\sigma$.
 \end{lem}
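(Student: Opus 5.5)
Let $U_0$ be the $W_{X_0}$-invariant open ball around $X_0$ provided by Lemma \ref{lemT4}, and set $U := W\cdot U_0 = \bigcup_{w\in W} wU_0$. This set is open and $W$-invariant. Since $\psi_Z$ is $W$-invariant, it is enough to prove smoothness on $U_0$. Indeed, on $wU_0$ we have $\psi_Z = \psi_Z\circ w^{-1}$, and $w^{-1}$ is a linear isomorphism mapping $wU_0$ onto $U_0$.

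The key step is an explicit formula for $\psi_Z$ on $U_0$. Take $X\in U_0$ and choose $w\in W$ with $wX\in\ft_+$, so that $\psi(X)=wX$. By Lemma \ref{lemT4}, $w\in W_{X_0}=W_\sigma$. Now use that $W$ acts orthogonally with respect to $\langle-,-\rangle$, and that $Z\in\ft_\sigma$ is fixed by $W_\sigma$ (Subsection \ref{subsec21}). This gives
\be
\psi_Z(X) = \langle wX, Z\rangle = \langle X, w^{-1}Z\rangle = \langle X, Z\rangle .
\ee
So on $U_0$ the function $\psi_Z$ coincides with the linear function $X\mapsto \langle X,Z\rangle$, which is $C^\infty$. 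The formula does not depend on the choice of $w$, which is consistent with $\psi$ being well defined.

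Consequently, on each translate $wU_0$ we get $\psi_Z(X)=\langle w^{-1}X, Z\rangle$, again a linear function. The translates $wU_0$ may overlap, but $\psi_Z$ is a single globally defined function. Smoothness is a local property, and every point of $U$ lies in some open set $wU_0$ on which $\psi_Z$ is linear. Hence $\psi_Z\in C^\infty(U)$.

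The only point needing care is the use of Lemma \ref{lemT4} to restrict $w$ to $W_{X_0}$. One must also identify $W_{X_0}$ with $W_\sigma$ and use that $\ft_\sigma$ is exactly the fixed-point set of $W_\sigma$. Both facts are recorded in Subsection \ref{subsec21}, so I expect no real obstacle here. The substantive input is the hypothesis $Z\in\ft_\sigma$: it is what makes $w^{-1}Z=Z$ hold for all relevant $w$. For general $Z\in\ft$ this fails, and $\psi_Z$ is then only piecewise linear near $X_0$.
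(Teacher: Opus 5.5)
Your proof is correct and follows the paper's argument essentially verbatim. You take the ball $U_0$ of Lemma~\ref{lemT4}, use $w\in W_{X_0}=W_\sigma$ together with orthogonality of the $W$ action and the $W_\sigma$-invariance of $Z\in\ft_\sigma$ to get $\psi_Z(X)=\langle Z,X\rangle$ on $U_0$, and then transport smoothness to $U=W\cdot U_0$ by $W$-invariance, which is exactly the step the paper states in one line.
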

 \begin{proof}
  Remember that $\ft_\sigma$   is the fixed point set of $ W_{X_0} =W_\sigma$.
 We know that $\psi(X)$ is given by $\psi(X) = w_X X$, where $w_X \in W$ is any element that transforms $X\in \ft$ into $\ft_+$.
 Taking $U_0$ to be the
 open neighbourhood of $X_0$  exhibited in Lemma \ref{lemT4}, we have
   \be
   \psi_Z(X) =\langle Z, w_X X\rangle =  \langle w_X^{-1} Z, X\rangle = \langle Z, X\rangle, \quad \forall X\in U_0,
   \label{psiXZ}\ee
   Thus, the restriction of $\psi_Z$ on $U_0$ is $C^\infty$.
 Since $\psi_Z$ is a $W$-invariant function on $\ft$, it is also $C^\infty$  on $U = W\cdot U_0$.
 \end{proof}

It is plain that every $W$-invariant function on $\ft$ extends to a unique $G$-invariant function on $\fg$.
According to the classical Chevalley extension theorem,  every Weyl invariant polynomial on $\ft$ extends
to a unique $G$-invariant polynomial
on $\fg$,  and thus the polynomial rings
$\bR[\ft]^W$ and $\bR[\fg]^G$ are isomorphic.
It is known that the analogous  natural isomorphism holds for continuous as well as for $C^\infty$
functions (see e.g.~\cite[Cor.~3.29 and Thm. 30.31]{Mi}).
This readily implies the following `localized' extension property.

\begin{lem}\label{lemT7}
Take a continuous $W$-invariant function $f\in C(\ft)^W$ that restricts  to an element of $C^\infty(\cU)^W$ for
a $W$-invariant open subset $\cU \subset \ft$.
Then, the unique extension  $F\in C(\fg)^G$ is $C^\infty$ on the $G$-invariant open subset $\cV := \Ad_G(\cU)$ of $\fg$.
\end{lem}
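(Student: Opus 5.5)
The idea is to reduce the localized statement to the global isomorphism $C^\infty(\ft)^W \simeq C^\infty(\fg)^G$ quoted just before the lemma. We do this by multiplying $f$ by a suitable $W$-invariant cutoff function, which lets us work with a globally smooth invariant function. Since $C^\infty$ is a local property, it suffices to show that $F$ is smooth in some neighbourhood of each point $Y_0\in\cV$. Moreover, $F$ is $G$-invariant and $\cV$ is $G$-invariant, so we may take $Y_0 = X_0\in \cU\cap \ft_+$. Indeed, $Y_0=\Ad_g(X_0)$ with $X_0=\Psi(Y_0)$, and $X_0\in\cU$ because $\Psi(Y_0)$ lies in $\Ad_G(Y_0)\cap \ft$, which is a $W$-orbit meeting the $W$-invariant set $\cU$.

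Fix $X_0\in\cU$ and choose a $W$-invariant smooth cutoff $\chi\in C^\infty(\ft)^W$ as follows. Take a small closed ball $B$ around $X_0$ with $B\subset\cU$, and a smooth bump $\phi$ equal to $1$ near $X_0$ and supported in the interior of $B$. Then set
$$\chi:=\frac{1}{|W|}\sum_{w\in W}\phi\circ w,$$
or, to keep the value $1$ near $X_0$, the product $\prod_{w\in W}(1-\phi\circ w)$ subtracted from $1$. Either way $\chi$ is $W$-invariant and smooth, its support lies in $W\cdot B\subset \cU$, and $\chi\equiv 1$ on a $W$-invariant neighbourhood $U_1$ of $X_0$. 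The second choice is clearly $1$ wherever some $\phi\circ w=1$; the averaged version is not, so I would use the product form. Then $\tilde f:=\chi f$ extends by zero to an element of $C^\infty(\ft)^W$, because $f$ is smooth on $\cU$ and $\chi$ vanishes near $\ft\setminus\cU$. The main technical point here is that the support of $\chi$ is a compact subset of $\cU$.

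By the cited result \cite[Cor.~3.29 and Thm.~30.31]{Mi}, $\tilde f$ extends to some $\tilde F\in C^\infty(\fg)^G$. Similarly, $\chi$ extends to $X\in C^\infty(\fg)^G$; the smoothness of this extension is in fact not needed. Uniqueness of continuous invariant extensions (a $G$-invariant function on $\fg$ is determined by its restriction to $\ft$, since every orbit meets $\ft$) gives $\tilde F = X\cdot F$ as continuous functions. On the open $G$-invariant set $\cV_1:=\Ad_G(U_1)$, the extension of $\chi$ equals $1$, so $F=\tilde F$ there. Thus $F$ is $C^\infty$ on $\cV_1\ni X_0$. It remains to check that $\Ad_G(U_1)$ is open in $\fg$. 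This holds because it equals $\Psi^{-1}(U_1\cap\ft_+)$, $\Psi$ is continuous by Theorem \ref{thmT2}, and $U_1\cap\ft_+$ is open in $\ft_+$.

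The only step needing care is the construction of a $W$-invariant cutoff that is identically $1$ near the whole orbit $W\cdot X_0$ and supported in $\cU$. The product construction handles this directly, so I expect no real obstacle beyond invoking the global smooth extension theorem.
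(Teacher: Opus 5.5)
Your proof is correct and is essentially the argument the paper leaves implicit: the paper only asserts that the lemma is ``readily'' implied by the global isomorphism $C^\infty(\fg)^G\simeq C^\infty(\ft)^W$ of \cite{Mi}. Your $W$-invariant cutoff $\chi=1-\prod_{w\in W}(1-\phi\circ w)$, the identification of $\tilde F$ with the invariant extension of $\chi$ times $F$ via uniqueness of invariant extensions, and the openness of $\Ad_G(U_1)=\Psi^{-1}(U_1\cap\ft_+)$ supply exactly that localization (naming the extension of $\chi$ by $X$ clashes with the paper's use of $X$ for elements of $\ft$, but this is only cosmetic).
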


The next result will play an important role in what follows.

 \begin{prop}\label{propT8}
 Choose $X_0 \in \sigma$, where $\sigma \subset \ft_+$ is an open face.
 Then, there exists a $G$-invariant
 open neighbourhood $V$ of $X_0$ on which the function $\Psi_Z$ is $C^\infty$ for every $Z\in \ft_\sigma$.
 Moreover, if $Y_0 \in V$ is such that $X_0 = \Psi(Y_0) = \Ad_{g(Y_0)}(Y_0)$ for some $g(Y_0)\in G$, then
 the derivative of $\Psi_Z$ at $Y_0$ is given by
 \be
 \nabla \Psi_Z (Y_0) = \Ad_{g(Y_0)^{-1}}(Z).
 \label{T19}\ee
 \end{prop}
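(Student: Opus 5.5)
The plan is to combine Lemma \ref{lemT6} with Lemma \ref{lemT7} for smoothness, and then compute the gradient from Lemma \ref{lemT1} together with the equivariance \eqref{T12}.

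\emph{Smoothness.} Let $U\subset\ft$ be the $W$-invariant open neighbourhood of $X_0$ from Lemma \ref{lemT6}. On $U$, the restriction of $\psi_Z$ is $C^\infty$ for every $Z\in\ft_\sigma$. (If $\sigma=\ft_+^\circ$, we may take $U$ to be the $W$-orbit of a small ball around $X_0$, and the same argument applies.) The restriction of $\Psi_Z$ to $\ft$ is $\psi_Z$, since $\Psi(X)=\psi(X)$ for $X\in\ft$; hence $\Psi_Z$ is the unique $G$-invariant continuous extension of $\psi_Z$. Lemma \ref{lemT7} then shows that $\Psi_Z$ is $C^\infty$ on $V:=\Ad_G(U)$. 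This set is open and $G$-invariant, and it contains $X_0$. The key point is that the single set $U$ works for all $Z\in\ft_\sigma$ at once, and Lemma \ref{lemT6} provides exactly this.

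\emph{Gradient at points of the face.} First take $Y_0=X\in U_0\subset\ft$ near $X_0$, where $U_0$ is as in Lemma \ref{lemT4}. By \eqref{psiXZ}, $\psi_Z(X)=\langle Z,X\rangle$ on $U_0$, so the $\ft$-component of $\nabla\Psi_Z(X)$ equals $Z$. Lemma \ref{lemT1} gives $\nabla\Psi_Z(X)\in\zeta(\fg_X)$, and $\zeta(\fg_X)\subset\ft$ because $\fg_X\supset\ft$ and $\ft$ is maximal abelian. Therefore $\nabla\Psi_Z(X)=Z$ for all $X\in U_0$, and in particular at $X_0$.

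\emph{General $Y_0\in V$.} Apply \eqref{T12} in its integrated form: $G$-invariance of $\Psi_Z$ gives $\nabla\Psi_Z(\Ad_g Y)=\Ad_g\nabla\Psi_Z(Y)$. With $\Ad_{g(Y_0)}Y_0=X_0$, this yields $\nabla\Psi_Z(Y_0)=\Ad_{g(Y_0)^{-1}}\nabla\Psi_Z(X_0)=\Ad_{g(Y_0)^{-1}}Z$, which is \eqref{T19}. The result does not depend on the choice of $g(Y_0)$: two choices differ by an element of $G_{X_0}=G_\sigma$, and $G_\sigma$ fixes $Z\in\ft_\sigma=\zeta(\fg_\sigma)$ because $G_\sigma$ is connected. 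The main subtlety I expect is the step $\zeta(\fg_X)\subset\ft$ combined with the identification of the $\ft$-component. An alternative route is to differentiate along $\ft$ and along $[\fg,X_0]$ separately, using $\fg=\ft\oplus[\fg,X_0]^{\perp}$-type decompositions. Since $X_0$ need not be regular, I would rely on Lemma \ref{lemT1} and not on a direct decomposition.
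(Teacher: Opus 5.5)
Your proof is correct and follows the paper's route. You get smoothness on $V=\Ad_G(U)$ from Lemmas \ref{lemT6} and \ref{lemT7}, use Lemma \ref{lemT1} to place $\nabla\Psi_Z(X_0)$ in $\ft$, identify it with $Z$ via \eqref{psiXZ}, and transport it to $Y_0$ by the equivariance \eqref{T12}; the only differences are that you use the weaker inclusion $\zeta(\fg_X)\subset\ft$ (enough here) where the paper uses $\zeta(\fg_\sigma)=\ft_\sigma$, and you explicitly handle the case $\sigma=\ft_+^\circ$, which Lemmas \ref{lemT4} and \ref{lemT6} formally exclude.
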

 \begin{proof}
 Consider the $W$-invariant neighbourhood $U$ of $X_0$ exhibited in Lemma \ref{lemT6}.
 The unique $G$-invariant extension of $\psi_Z$ is obviously $\Psi_Z$, and Lemma \ref{lemT7} ensures
 that $\Psi_Z$ is $C^\infty$ on $V:= \Ad_G(U)$.
 Since $\Psi_Z$ is $G$-invariant on $U$, its derivative enjoys the relation
 \be
 \nabla \Psi_Z (Y_0) = \Ad_{g(Y_0)^{-1}}(\nabla \Psi_Z(X_0)).
 \label{T20}\ee
 The crux is that the derivative $\nabla \Psi_Z(X_0)$ is $\ft_\sigma$ valued, since $\ft_\sigma$ is the center of
 $\fg_{X_0} = \fg_\sigma$.
 Therefore $\nabla \Psi_Z(X_0)$ can be determined by  calculating
 the derivative at $X_0$ of the restriction of $\Psi_Z$ on a neighbourhood $U_0$ of $X_0$ in $\ft$, i.e.,
 by calculating the derivative of $\psi_Z$ at $X_0$.
 Then, by using $U_0$ from Lemma \ref{T4},
 the claim  follows from equation \eqref{psiXZ}.
  \end{proof}

It is worth noting that $g(Y_0)$ can be replaced by $\eta g(Y_0)$ with any $\eta \in G_\sigma$,
 and this ambiguity drops out from the formula \eqref{T19} since $Z\in \ft_\sigma =\zeta(\fg_\sigma)$.

\subsection{The Guillemin--Sternberg torus action}
\label{subsec23}

The following result is stated in this form in \cite[Thm.~2 and Prop.~1]{Lane}.
It is a consequence of Lemmas 6.7-6.9 in \cite{HNP}. See also \cite[Thm.~3.1]{LMTW}.

\begin{thm}\label{thmT9}
Suppose that $\cJ: M \to \fg \simeq \fg^*$ is the equivariant momentum map for a smooth Hamiltonian action
of a connected compact Lie group, with simple Lie algebra $\fg$,   on a connected symplectic manifold $(M,\omega)$.
Then, there is a unique face $F_S$ \eqref{T4} of $\ft_+$ for which $\cJ(M) \cap \ft_+ \subseteq F_S$
and $\cJ(M) \cap F_S^\circ \neq \emptyset$.
For this open face $F_S^\circ=:\sigma$,
\be
M_\sigma:= \cJ^{-1}(\Sigma_\sigma) = G\cdot \cJ^{-1}(\sigma)
\label{T21}\ee
is a  $G$-invariant, dense open, connected submanifold of $M$.
\end{thm}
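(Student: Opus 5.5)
Among all faces $F$ of $\ft_+$ with $\cJ(M)\cap\ft_+\subseteq F$, I take the minimal one, $F_S$ with $S$ maximal; faces are closed under intersection (the intersection of $F_{S_1}$ and $F_{S_2}$ is $F_{S_1\cup S_2}$), so this minimal face is unique. Then I need to see that $\cJ(M)$ meets $F_S^\circ$. If not, $\cJ(M)\cap\ft_+$ lies in $\partial F_S=\bigcup_{\alpha\in\Delta\setminus S}F_{S\cup\{\alpha\}}$. Minimality alone does not rule out that it is spread over several of these facets. For this I would use the local structure of $\cJ(M)\cap\ft_+$ near a point: by the local convexity theorem (Kirwan, and the Lemmas 6.7--6.9 in \cite{HNP} the statement refers to), $\cJ(M)\cap\ft_+$ is locally a convex cone at each of its points. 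Since $M$ is connected, it is in fact a connected set that is locally convex, and therefore convex. A convex subset of the polyhedral cone $F_S$ that meets no relative-interior point lies in a single proper face; that would contradict minimality. So $\cJ(M)\cap F_S^\circ\neq\emptyset$, and uniqueness of such a face is then immediate, because $F_S$ is the smallest face containing the convex set. I expect invoking convexity to be the main obstacle, and I would simply quote it from \cite{HNP,Lane}.

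Next I verify the equality $M_\sigma=G\cdot\cJ^{-1}(\sigma)$. By $G$-equivariance, $\cJ(g\cdot p)=\Ad_g\cJ(p)$, so $G\cdot\cJ^{-1}(\sigma)\subseteq\cJ^{-1}(\Sigma_\sigma)$. Conversely, if $\cJ(p)=\Ad_g X$ with $X\in\sigma$, then $g^{-1}\cdot p\in\cJ^{-1}(\sigma)$.

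For openness, a point $Y\in\cJ(M)$ has $\Psi(Y)\in F_S$. The condition $\Psi(Y)\in\sigma$ says that the finitely many functions $\alpha(-\ri\Psi(Y))$, $\alpha\in\Delta\setminus S$, are strictly positive. These functions are continuous on $\fg$ by Theorem~\ref{thmT2}, so $M_\sigma=\cJ^{-1}\Psi^{-1}(\sigma)$ is the preimage, within $\cJ^{-1}(\Psi^{-1}(F_S))=M$, of an open set and is therefore open. It is a submanifold simply because it is open.

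For density and connectedness, the complement $M\setminus M_\sigma$ is the set where the continuous function $\prod_{\alpha\in\Delta\setminus S}\alpha(-\ri\Psi\circ\cJ)$ vanishes. I would show that this set has codimension at least $2$, or at least empty interior, with connected complement. Near any $p\in M$, the local normal form (symplectic slice/Marle--Guillemin--Sternberg) shows that the local image $\Psi(\cJ(U))$ is a neighbourhood of $\Psi(\cJ(p))$ in a convex polyhedral cone $C_p$. Its preimage structure is that of $\cJ$ on a model $G\times_H(\mathfrak h^\circ\times V)$. The set where $\Psi\circ\cJ$ lands in a proper face of $F_S$ is the preimage of a union of linear hyperplane pieces, and since the cone spans $\ft_\sigma$ directions (by the first part, applied locally), these pieces are proper. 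In the model, the zero set of a nonzero real-analytic-in-the-slice condition is nowhere dense. So $M_\sigma$ is dense.

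Connectedness is the delicate piece. In the model the bad set is cut out by the weight components of quadratic moment functions on the symplectic vector space $V$, so it has real codimension $\geq2$ locally: a hyperplane condition on a quadratic $\mathrm{U}(1)$-type moment map $\sum|z_k|^2$-combination forces complex coordinates to vanish. I would cite \cite[Lemmas 6.7--6.9]{HNP} and \cite[Thm.~3.1]{LMTW} for this principal-face statement rather than reprove it. Removing a closed set of codimension $\geq2$ from a connected manifold leaves it connected, which finishes the argument.
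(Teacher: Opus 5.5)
\textbf{Gap in the first step: global convexity.} You derive the principal face from the convexity of $\cJ(M)\cap\ft_+$, but this fails in the stated generality. The theorem assumes neither that $M$ is compact nor that $\cJ$ is proper. Local convexity is a statement about points of $M$: the image of a small neighbourhood of $x\in M$ is a neighbourhood of $\Psi(\cJ(x))$ in a convex polyhedral cone. It is not a statement about points of $\cJ(M)$, whose fibres may be non-compact. Moreover, ``connected and locally convex implies convex'' (Tietze--Nakajima) requires the set to be closed, which $\cJ(M)\cap\ft_+$ need not be.

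A concrete counterexample: let $G=\mathrm{SU}(3)$ and $A=\{X\in\ft_+^\circ : 1<\vert X\vert<2\}$. This set is connected, open and non-convex, since the chamber has opening angle $\pi/3$. Take $M=G\times\Ad_G(A)$, an open subset of $T^*G\simeq G\times\fg^*$, with the left-action momentum map $\cJ(g,\xi)=\Ad_g\xi$. Then $M$ is connected and $\cJ(M)\cap\ft_+=A$ is connected and locally convex, but not convex. Kirwan-type convexity theorems need a properness hypothesis, so you cannot quote them here. The paper gives no proof of its own: it cites \cite[Thm.~2 and Prop.~1]{Lane}, \cite[Lemmas 6.7--6.9]{HNP} and \cite[Thm.~3.1]{LMTW}. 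Those results obtain the principal face without properness, from local structure combined with connectedness of $M$.

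\textbf{Circularity in the density step.} Applying the first part to a small connected $G$-invariant neighbourhood $U$ only produces a face $\sigma_U$ with $\bar\sigma_U\subseteq\bar\sigma$. Showing $\sigma_U=\sigma$ is exactly where connectedness of $M$ must be used, and you do not do this. One way to close both gaps is the following.
\begin{enumerate}
\item In a connected Marle--Guillemin--Sternberg normal-form neighbourhood, $\cJ$ is real-analytic. The set $\{x : \dim\fg_{\cJ(x)}>d\}$ is the common zero set of the minors of size $\dim\fg-d$ of $\mathrm{ad}_{\cJ(x)}$. Hence the locus $M_{\min}$ where $\dim\fg_{\cJ(x)}$ attains its minimum $d_0$ has open closure, so by connectedness it is dense.
\item For each face $\tau$ with $\dim\fg_\tau=d_0$, the set $(\Psi\circ\cJ)^{-1}(\tau)$ is open. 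This holds because $\Psi\circ\cJ$ is continuous and every face $\tau'\neq\tau$ whose closure contains $\tau$ has $\fg_{\tau'}\subsetneq\fg_\tau$.
\item Local convexity at a point $y\in M$ shows that two such sets for different faces $\tau_1,\tau_2$ cannot both accumulate at $y$. Let $F$ be the minimal face of the local cone at $y$; it is attained by points near $y$, so $\dim\fg_F\geq d_0$. On the other hand $\tau_1,\tau_2\subseteq\bar F$ forces $\fg_F\subseteq\fg_{\tau_i}$, whence $F=\tau_1=\tau_2$.
\end{enumerate}
Connectedness of $M$ then leaves a single face $\sigma$. This gives existence, uniqueness and density of $M_\sigma=M_{\min}$ at once.

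\textbf{Connectedness of $M_\sigma$.} Here you end up citing \cite{HNP,LMTW}, as the paper does for the whole statement. Your codimension-two argument via $\mathrm{U}(1)$-type quadratic moment maps is only a heuristic when the slice representations are non-abelian.
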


\begin{defn}\label{defnT10}
Following \cite{LMTW},
we call $\sigma$ appearing in Theorem \ref{thmT9} the `principal open face' of the $G$ action.
\end{defn}

Using the principal open face  $\sigma$,  we
introduce the restricted momentum map
\be
\cJ_\sigma: M_\sigma \to \fg.
\label{T22}\ee
It takes its values in the Poisson submanifold $\Sigma_\sigma \subset  \fg$ \eqref{T13},
but  we can also regard it as a map into $\fg$.

The next lemma is a direct consequence of standard properties of momentum maps \cite[pp.~241-242]{GS1}.
\begin{lem}\label{lemT11}
Let $F$ be a $G$-invariant real function on $\fg\simeq \fg^*$ that is $C^\infty$ on an open neighbourhood of $\Sigma_\sigma$
and consider $\cF := F \circ \cJ_\sigma \in C^\infty(M_\sigma)^G$.
Then, the integral curve $p(t)$ of the Hamiltonian vector field of $\cF$ through an arbitrary initial value $p_0 \in M_\sigma$
is given by the formula
\be
p(t) = \exp\left( t \nabla F(\cJ(p_0))\right) \cdot p_0,
\label{ptgen}\ee
where  $g\cdot p_0$ is used to
denote the action of any $g\in G$ on $p_0\in M$.
\end{lem}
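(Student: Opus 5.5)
The plan is to reduce the claim to two standard facts: the Hamiltonian vector field of a collective function is the infinitesimal $G$-action generated by the pulled-back gradient, and for a $G$-invariant function that gradient is conserved along the flow. Let $V_F$ denote the Hamiltonian vector field of $\cF = F\circ\cJ_\sigma$. For any $\xi\in\fg$ let $\xi_M$ be the generating vector field of the $G$ action, with Hamiltonian $\langle \cJ,\xi\rangle$. By the chain rule, $d\cF(p) = \langle \nabla F(\cJ(p)), d\cJ(p)(\cdot)\rangle$. Since $\cJ$ is the momentum map, this gives
\be
V_F(p) = \bigl(\nabla F(\cJ(p))\bigr)_M(p), \qquad \forall p\in M_\sigma .
\ee
Here $F$ is smooth near $\Sigma_\sigma\ni \cJ(p)$, so $\nabla F$ is defined there.

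Next I will show that $\cJ$, and hence $\nabla F(\cJ(p))$, is constant along the flow. The function $F$ is $G$-invariant, so Lemma \ref{lemT1} gives $\nabla F(Y_0)\in \zeta(\fg_{Y_0})\subset \fg_{Y_0}$. By equivariance, $\frac{d}{dt}\cJ(p(t)) = d\cJ(V_F) = -[\nabla F(\cJ), \cJ]$, up to the sign convention for $\mathrm{ad}^*$. This bracket vanishes because $\nabla F(Y_0)$ commutes with $Y_0$. Equivalently, $\langle\cJ,\xi\rangle$ Poisson commutes with $\cF$ for every $\xi$, since $\cF$ is $G$-invariant. So $\cJ(p(t))=\cJ(p_0)$ for as long as the curve exists, and $\xi_0:=\nabla F(\cJ(p_0))$ is constant along it.

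Finally I will verify that $q(t):=\exp(t\xi_0)\cdot p_0$ is an integral curve. Its velocity is $(\xi_0)_M(q(t))$. It stays in $M_\sigma$ because $M_\sigma$ is $G$-invariant. Moreover $\cJ(q(t)) = \Ad_{\exp t\xi_0}\cJ(p_0) = \cJ(p_0)$, since $\xi_0\in\fg_{\cJ(p_0)}$. Hence $\nabla F(\cJ(q(t)))=\xi_0$, and so $\dot q(t) = V_F(q(t))$. Uniqueness of integral curves then gives $p(t)=q(t)$, and the formula shows the flow is complete. The only delicate point, and it is minor, is the bookkeeping of sign conventions in $\xi_M$ and in the momentum map equation $d\langle\cJ,\xi\rangle=\iota_{\xi_M}\omega$. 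These must be fixed so that the formula \eqref{ptgen} comes out with the stated sign. Apart from that, the key input is $\nabla F(Y_0)\in\fg_{Y_0}$ from Lemma \ref{lemT1}.
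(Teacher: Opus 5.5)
Your proposal is correct and follows the standard collective-Hamiltonian argument that the paper does not write out but cites from Guillemin--Sternberg: $X_{F\circ\cJ}(p)=(\nabla F(\cJ(p)))_M(p)$, and $\nabla F(Y_0)\in\fg_{Y_0}$ (Lemma \ref{lemT1}) keeps $\cJ$, and hence the generator, fixed, so $\exp(t\nabla F(\cJ(p_0)))\cdot p_0$ is an integral curve defined for all $t$. Your second step (conservation of $\cJ$ along the flow) is not actually needed, since your third step's direct verification together with uniqueness of integral curves already gives the result.
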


\begin{cor}\label{corT12}
For any $Z\in \ft_\sigma$, the integral curve of the function $\Psi_Z \circ \cJ_\sigma \in C^\infty(M_\sigma)^G$ through
an arbitrary
initial value $p_0\in M_\sigma$ is given by the formula
\be
p(t) = \left(g(\cJ(p_0))^{-1} e^{t Z } g(\cJ(p_0)) \right)\cdot p_0,
\label{ptZ}\ee
where $g(\cJ(p_0)) \in G$ is any group element that satisfies $\Psi(\cJ(p_0))= \Ad_{g(\cJ(p_0))}(\cJ(p_0))$.
\end{cor}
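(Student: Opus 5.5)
This is a direct combination of Lemma \ref{lemT11} with Proposition \ref{propT8}. The plan is to verify that $F := \Psi_Z$ satisfies the hypotheses of Lemma \ref{lemT11}, and then to substitute the gradient formula \eqref{T19} into \eqref{ptgen}.

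\textbf{Smoothness.} For every $X_0 \in \sigma$, Proposition \ref{propT8} gives a $G$-invariant open neighbourhood $V_{X_0}$ of $X_0$ on which $\Psi_Z$ is $C^\infty$, for every $Z\in\ft_\sigma$. Set $V := \bigcup_{X_0\in\sigma} V_{X_0}$. This is an open, $G$-invariant set containing $\sigma$, hence containing $\Sigma_\sigma = G\cdot\sigma$. Since smoothness is a local property, $\Psi_Z$ is $C^\infty$ on $V$. It is also $G$-invariant on $\fg$. Therefore $\Psi_Z\circ\cJ_\sigma \in C^\infty(M_\sigma)^G$, and Lemma \ref{lemT11} applies.

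\textbf{Gradient.} Lemma \ref{lemT11} gives $p(t) = \exp\bigl(t\nabla\Psi_Z(\cJ(p_0))\bigr)\cdot p_0$. Put $Y_0 := \cJ(p_0) \in \Sigma_\sigma$. Then $X_0 := \Psi(Y_0) \in \sigma$, and $X_0 = \Ad_{g(Y_0)}(Y_0)$. Here $Y_0$ lies in the neighbourhood $V_{X_0}$ of Proposition \ref{propT8}, because that neighbourhood is $G$-invariant and contains $X_0$. So \eqref{T19} yields $\nabla\Psi_Z(Y_0) = \Ad_{g(Y_0)^{-1}}(Z)$.

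\textbf{Exponential.} Finally, $\exp\bigl(t\,\Ad_{g^{-1}}Z\bigr) = g^{-1}e^{tZ}g$, which gives \eqref{ptZ}. The choice of $g(Y_0)$ is unique only up to left multiplication by $G_\sigma$. This ambiguity is harmless: $Z \in \ft_\sigma$ is central in $\fg_\sigma$, so $e^{tZ}$ commutes with $G_\sigma$, as already remarked after Proposition \ref{propT8}.

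The only mildly delicate point is the first step, namely that the neighbourhood required by Lemma \ref{lemT11} must contain all of $\Sigma_\sigma$. The union of the $G$-invariant neighbourhoods from Proposition \ref{propT8} handles this. Everything else is formal.
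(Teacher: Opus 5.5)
Your proposal is correct and follows essentially the same route as the paper: the union of the $G$-invariant neighbourhoods from Proposition~\ref{propT8} gives smoothness of $\Psi_Z$ on an open neighbourhood of $\Sigma_\sigma$, and then \eqref{T19} is substituted into \eqref{ptgen}. Your remarks on the identity $\exp(t\,\Ad_{g^{-1}}Z)=g^{-1}e^{tZ}g$ and on the $G_\sigma$-ambiguity of $g(\cJ(p_0))$ spell out details that the paper leaves implicit or notes after Proposition~\ref{propT8}.
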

\begin{proof}
We see from Proposition \ref{propT8}  that $\Psi_Z$ is $C^\infty$ on an open $G$-invariant neighbourhood of $\Sigma_\sigma$,
since such a neighbourhood is provided by the union of the $G$-invariant open sets  containing $X_0\in \sigma$ that we exhibited
in Proposition \ref{propT8} for every $X_0\in \sigma$. The statement is then verified by combining \eqref{T19} with \eqref{ptgen}.

Let us note in passing that if $\sigma= \ft_+^\circ$ then $\Sigma_\sigma$ is the open set of regular elements of $\fg$, which
in this case coincides with
 the open neighbourhood mentioned above.
\end{proof}

The torus action  described below goes back to the paper \cite{GS3}, where its combination with Thimm's trick \cite{GS2,T} was
used to construct action variables for the Gelfand--Cetlin systems.
We note that the relations \eqref{T7} and \eqref{T15} are applied in \eqref{musi}.

\begin{thm}\label{thmT13}
The smooth map
\be
\mu_\sigma:= \Psi_\sigma \circ \cJ_\sigma:
M_\sigma \to \ft_\sigma
\label{musi}\ee
is the momentum map for the Hamiltonian action of the torus $T_\sigma$ on $M_\sigma$ given by
\be
\tau_\sigma \star p_0 = (g(\cJ(p_0))^{-1} \tau_\sigma g(\cJ(p_0)))\cdot p_0,
\qquad \forall \tau_\sigma \in T_\sigma,\, \forall p_0 \in M_\sigma,
\label{GSact}\ee
where the dot refers to the original $G$ action and $g(\cJ(p_0))$ was defined in Corollary \ref{corT12}.
\end{thm}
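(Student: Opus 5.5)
We will verify in turn that the formula \eqref{GSact} defines a smooth action of $T_\sigma$, and that $\mu_\sigma$ is a momentum map for it.

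First, \eqref{GSact} is well defined. For $p_0\in M_\sigma$ we have $\cJ(p_0)\in\Sigma_\sigma$, so $\Psi(\cJ(p_0))\in\sigma$. The element $g(\cJ(p_0))$ is determined up to left multiplication by $\eta\in G_\sigma$. Since $T_\sigma\subseteq Z(G_\sigma)$ (its Lie algebra is $\zeta(\fg_\sigma)$ by \eqref{T7}, and $G_\sigma$ is connected), we get $\eta^{-1}\tau_\sigma\eta=\tau_\sigma$, and the ambiguity drops out. Smoothness in $p_0$ I would obtain locally. On $\Sigma_\sigma\simeq G/G_\sigma\times\sigma$ (Remark \ref{remT3}), one can choose $g$ smoothly on small open sets using local sections of $G\to G/G_\sigma$. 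The map $Y\mapsto g(Y)^{-1}\tau g(Y)$ is then a well-defined smooth map $\Sigma_\sigma\to G$.

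Second, the action property. By equivariance of $\cJ$, $\cJ(h\cdot p_0)=\Ad_h\cJ(p_0)$ for any $h\in G$. Set $h=g_0^{-1}\tau g_0$ with $g_0=g(\cJ(p_0))$. Then $\Ad_{g_0}\Ad_h\cJ(p_0)=\Ad_\tau\Psi(\cJ(p_0))=\Psi(\cJ(p_0))$, because $\tau\in T\subseteq G_\sigma$ fixes elements of $\sigma$. Hence we may take $g(\cJ(\tau\star p_0))=g_0$. This shows that $M_\sigma$ is preserved, since $\cJ(\tau\star p_0)\in\Sigma_\sigma$. It also gives
\[
\tau'\star(\tau\star p_0)=(g_0^{-1}\tau'\tau g_0)\cdot p_0 ,
\]
so we have a group action, with the identity acting trivially. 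Commutativity with the $G$ action follows the same way: $g(\cJ(k\cdot p_0))$ can be taken as $g_0k^{-1}$.

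Third, the momentum map property. $\mu_\sigma$ is smooth by Theorem \ref{thmT2}. For $Z\in\ft_\sigma$, its $Z$-component is $\langle\mu_\sigma,Z\rangle=\Psi_Z\circ\cJ_\sigma$. By Corollary \ref{corT12}, the Hamiltonian flow of this function through $p_0$ is $(g_0^{-1}e^{tZ}g_0)\cdot p_0=e^{tZ}\star p_0$. So the Hamiltonian vector field of $\langle\mu_\sigma,Z\rangle$ equals the infinitesimal generator of the $T_\sigma$ action for $Z$. This is exactly the momentum map condition. Invariance of $\mu_\sigma$ under $T_\sigma$ (equivariance for an abelian group) follows from $\cJ(\tau\star p_0)=\Ad_{g_0^{-1}\tau g_0}\cJ(p_0)$, which lies in the same $G$-orbit, so $\Psi$ takes the same value.

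The main obstacle is the careful smoothness and well-definedness bookkeeping in the first step. Identifying the generator with the Hamiltonian vector field is immediate from Corollary \ref{corT12}, which in turn rests on Proposition \ref{propT8}.
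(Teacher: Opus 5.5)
Your proof is correct, but it runs in the opposite direction from the paper's.

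\textbf{The paper's route.} The paper starts from $\mu_\sigma$. Its components $\Psi_Z\circ\cJ_\sigma$ Poisson commute, because $\cJ_\sigma$ is a Poisson map into $\Sigma_\sigma$ and the $G$-invariant functions $\Psi_Z|_{\Sigma_\sigma}$ are Casimirs there. By \eqref{ptZ}, their Hamiltonian flows are complete and $2\pi$-periodic for $Z$ in the integral lattice. Hence they integrate to a Hamiltonian $T_\sigma$-action with momentum map $\mu_\sigma$, and \eqref{ptZ} then shows this action is given by \eqref{GSact}. On that route, well-definedness, smoothness, the group law and symplecticity all come for free from the flows. Nothing about the choice of $g(\cJ(p_0))$ has to be checked.

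\textbf{Your route.} You start from the formula \eqref{GSact} and verify by hand that it is a well-defined smooth action. This uses centrality of $T_\sigma$ in $G_\sigma$, local sections of $G\to G/G_\sigma$, and the key observation that $g(\cJ(\tau\star p_0))$ may be taken equal to $g(\cJ(p_0))$. You then use \eqref{ptZ} only to match infinitesimal generators with Hamiltonian vector fields. The Casimir argument for Poisson commutativity is then unnecessary, since commutativity follows from the $T_\sigma$-invariance of $\mu_\sigma$ that you establish. No periodicity argument is needed either, because the formula already defines a $T_\sigma$-action.

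This is exactly the direct check alluded to in Remark \ref{remT14}, and it also gives commutation with the $G$ action at no extra cost. The only point left implicit is that each $\tau\in T_\sigma$ acts symplectically. This holds because $\tau=e^{Z}$ for some $Z\in\ft_\sigma$, so $\tau$ acts as the time-one Hamiltonian flow of $\langle \mu_\sigma, Z\rangle$.
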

\begin{proof}
The map $\mu_\sigma$ is a Poisson map into $\ft_\sigma \simeq \ft_\sigma^*$ since
it is a composition
of two Poisson maps. Indeed, $\cJ_\sigma$ is a Poisson map from $M_\sigma$ to the Poisson submanifold
$\Sigma_\sigma$ of $\fg \simeq \fg^*$.
The component functions $\langle  Z, \Psi_\sigma \rangle$ of the map $\Psi_\sigma$ Poisson commute
since they are $G$-invariant functions on the Poisson submanifold $\Sigma_\sigma$ of $\fg \simeq \fg^*$,
and thus are in the center of the Poisson bracket on $C^\infty(\Sigma_\sigma)$.
Observe from \eqref{ptZ} that the Hamiltonian vector fields of the functions  $ \langle Z, \mu_\sigma \rangle = \Psi_Z \circ \cJ_\sigma $ are complete.
We also see from the formula \eqref{ptZ} that these integral curves are $2\pi$-periodic for any $Z$ from the
integral lattice, i.e., from the kernel of the homomorphism from $\ft_\sigma$ to $T_\sigma$  given by
$Z \mapsto \exp(2\pi Z)$. Therefore $\mu_\sigma$ is the momentum map for a Hamiltonian $T_\sigma$ action,
and the formula \eqref{ptZ} implies that this action operates according to \eqref{GSact}.
\end{proof}

\begin{rem}\label{remT13}
We refer to the action  \eqref{GSact} as the \emph{GS action} of $T_\sigma$.
As was already noted in Section \ref{sec1},
Guillemin and Sternberg assumed in \cite{GS3} that $\sigma = \ft_+^\circ$.
The generalization to an arbitrary
principal open face
$\sigma$ appears, for example, in the papers \cite{Lane,W}.
It is  possible to obtain similar torus actions from $G$ actions generated by
group valued momentum maps in the frameworks of Poisson--Lie groups and
quasi-Hamiltonian/quasi-Poisson geometry, too.
In fact, examples of such torus actions were used in \cite{FF} for studying some superintegrable systems.
\end{rem}

\begin{rem}\label{remT14}
The  $T_\sigma$ action \eqref{GSact}  commutes with the $G$ action on $M_\sigma$,
since its momentum map $\mu_\sigma$ is invariant under the $G$ action.
Of course, one can also check directly that the formula \eqref{GSact}  defines a smooth $T_\sigma$ action and
that it commutes with the original $G$ action on $M_\sigma$.
\end{rem}

\section{Collective superintegrability and action variables}
\label{sec3}

Now we come to the application of the results collected in Section \ref{sec2}.
We first show that the Abelian Poisson algebra $\fH$ \eqref{I5} yields a superintegrable system
and then explain how its action variables arise directly from the momentum map $\mu_\sigma$ \eqref{musi}.

\subsection{The proof of superintegrability}
\label{subsec31}

\begin{defn}\label{defnH1}
With $\sigma\subset \ft_+$ denoting the principal open face of the $G$ action as defined before,
let $M_\sigma^* \subset M_\sigma$ be the principal orbit type submanifold \cite{DK, Mi} of $M_\sigma$ with
respect to the  $T_\sigma$ action \eqref{GSact}.
\end{defn}

Note that $M_\sigma^* \subset M_\sigma \subset M$ is a chain of dense open connected
submanifolds, and thus $M_\sigma^* \subset M$ is also a dense open connected submanifold.
We can form the smooth, connected quotient manifold
$M_\sigma^*/T_\sigma$ and to avoid a trivial situation we  assume that
\be
 \dim(M_\sigma^*/ T_\sigma) < \dim(M_\sigma^*).
 \label{dimcond1}\ee
This means that the  generic orbits of the $T_\sigma$ action \eqref{GSact}
 have positive dimension.
It excludes the coadjoint orbits of $G$ from the possible manifolds $M$.
This also ensures that
\be
\ddim(\fH) < \frac{1}{2} \dim(M),
\label{dimcond2}\ee
 since  $\ddim(\fH) \leq \ddim( C^\infty(\fg^*)^G)=\rank(G)$
and
$2\,\rank(G)$ is smaller than the minimal dimension of any symplectic manifold on which
$G$ can act, except for the minimal coadjoint orbits of $\mathrm{SU}(n)$.
To see this, it is enough to notice that $\dim(G\cdot x) \geq \dim(G\cdot \cJ(x))$, and recall
the well known fact
that $\dim(G \cdot \xi) \geq 2\rank(G)$ for any nonzero $\xi \in \fg\simeq \fg^*$, except for the minimal
orbits $\mathbb{C P}^{n-1} \subset \mathfrak{su}(n)$.

\begin{thm}\label{main}
Consider a Hamiltonian action of a connected compact Lie group $G$ with simple Lie algebra $\fg$
on a connected symplectic manifold $(M,\omega)$, and let  $\cJ: M\to \fg\simeq \fg^*$ be the momentum map.
Then, the Abelian Poisson algebra $\fH$ \eqref{I5} of $G$-invariant collective Hamiltonians
and its centralizer $\fF$ \eqref{I3} in $C^\infty(M)$  satisfy the relation
$\ddim(\fH) + \ddim(\fF) = \dim(M)$.
Consequently, assuming the non-triviality condition \eqref{dimcond1},
 one obtains a superintegrable system in the sense of definition \ref{defnI}.
\end{thm}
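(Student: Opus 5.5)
The plan is to reduce everything to the Guillemin--Sternberg torus action of Theorem \ref{thmT13}. I will compare $\fH$, pointwise on the dense open set $M_\sigma^*$, with the Abelian algebra generated by the components $\langle Z,\mu_\sigma\rangle=\Psi_Z\circ\cJ_\sigma$, $Z\in\ft_\sigma$. Let $k$ denote the dimension of the principal orbits of the $T_\sigma$ action \eqref{GSact}. The aim is to show that, at every $p\in M_\sigma^*$,
\be
\ddim_p(\fH)=k \quad\hbox{and}\quad \ddim_p(\fF)=\dim(M)-k .
\label{plan1}\ee
The claimed equality follows from \eqref{plan1}, with the dense subset in the definition of functional dimension taken to be $M_\sigma^*$.

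\emph{Step 1: the differentials of $\fH$ agree with those of $\mu_\sigma$.} Take $F\in C^\infty(\fg^*)^G$ and $p\in M_\sigma$, and put $Y_0=\cJ(p)\in\Sigma_\sigma$. By Lemma \ref{lemT1},
\be
\nabla F(Y_0)\in\zeta(\fg_{Y_0})=\Ad_{g(Y_0)^{-1}}(\ft_\sigma),
\ee
where the equality uses \eqref{T7}. By the remark after Lemma \ref{lemT1}, these gradients span the whole of $\zeta(\fg_{Y_0})$ as $F$ varies. On the other hand, by \eqref{T19} the gradients $\nabla\Psi_Z(Y_0)$ with $Z\in\ft_\sigma$ span exactly the same space. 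Since $d(F\circ\cJ)_p=\langle\nabla F(Y_0),d\cJ_p\rangle$, I conclude that
\be
\span\{d\cH_p\mid \cH\in\fH\}=\span\{d\langle Z,\mu_\sigma\rangle_p\mid Z\in\ft_\sigma\}.
\ee
The same identification holds for the Hamiltonian vector fields. Using Lemma \ref{lemT11} and Corollary \ref{corT12}, the vector fields of $\fH$ at $p$ span precisely the tangent space of the $T_\sigma$-orbit through $p$. For a torus action, the rank of $d\mu_\sigma$ at $p$ equals the dimension of the orbit of $p$. On $M_\sigma^*$ all stabilizers coincide, so this dimension is the constant $k$, which gives $\ddim_p(\fH)=k$.

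\emph{Step 2: the upper bound for $\fF$.} Let $\cF\in\fF$. Then $\cF$ is constant along the flows of all $\cH\in\fH$. On $M_\sigma$ these flows are the one-parameter subgroups in \eqref{ptgen}, and by Step 1 they generate the tangent spaces of the $T_\sigma$-orbits. Since $T_\sigma$ is connected, $\cF$ is $T_\sigma$-invariant on $M_\sigma$. Consequently $d\cF_p$ annihilates the $k$-dimensional tangent space of the orbit at every $p\in M_\sigma^*$, and $\ddim_p(\fF)\le\dim(M)-k$.

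\emph{Step 3: the lower bound for $\fF$, which I expect to be the main obstacle.} Here the difficulty is that $\fF$ consists of \emph{global} smooth functions on $M$, whereas the natural invariants live only on $M_\sigma^*$, where the GS action is defined and smooth. My first attempt will be the following construction. The quotient $M_\sigma^*/T_\sigma$ is a smooth manifold of dimension $\dim(M)-k$. Take $f\in C^\infty_c(M_\sigma^*/T_\sigma)$ and pull it back along the projection; its support is compact because $T_\sigma$ is compact. Extending by zero gives a smooth function $\cF$ on $M$, since a compact subset of the open set $M_\sigma^*$ is closed in $M$. This $\cF$ Poisson commutes with every $\cH\in\fH$:
\begin{itemize}[label={}]
\item on $M_\sigma^*$, because $X_\cH$ is tangent to the $T_\sigma$-orbits;
\item on the complement of the support, trivially.
\end{itemize}
Choosing local coordinates on the quotient near the image of $p$, multiplied by bump functions, yields elements of $\fF$ whose differentials at $p$ span the full annihilator of the orbit tangent space. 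This gives $\ddim_p(\fF)\ge\dim(M)-k$.

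\emph{Conclusion.} Steps 1--3 establish \eqref{plan1}, hence $\ddim(\fH)+\ddim(\fF)=\dim(M)$. The remaining requirements of Definition \ref{defnI} then follow from the earlier material:
\begin{itemize}[label={}]
\item completeness of the flows follows from Lemma \ref{lemT11};
\item $\ddim(\fH)\ge1$ is \eqref{dimcond1};
\item $\ddim(\fH)<\dim(M)/2$ is \eqref{dimcond2}.
\end{itemize}
The only delicate points are that Step 1 needs smoothness of $\Psi_Z$ merely near $\Sigma_\sigma$, which Proposition \ref{propT8} provides, and the global extension argument of Step 3.
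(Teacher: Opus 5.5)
Your proposal is correct and follows essentially the paper's own route. Step 1 is the paper's Corollary~\ref{corH4}, which the paper obtains in Lemma~\ref{lemH3} by locally extending $\Psi_Z\vert_{\Sigma_\sigma}$ to global $G$-invariant functions rather than via Lemma~\ref{lemT1} and \eqref{T19}; Step 3 is exactly the bump-function construction on $M_\sigma^*/T_\sigma$ of Lemma~\ref{lemH5}, and your Step 2 just makes explicit the upper bound $\ddim(\fF)\le \dim(M)-\ddim(\fH)$ that the paper leaves implicit. One minor slip: completeness of the flows on all of $M$ comes from the Guillemin--Sternberg formula $p(t)=\exp(t\nabla F(\cJ(p_0)))\cdot p_0$ applied at every $p_0\in M$, not from Lemma~\ref{lemT11}, which is stated only on $M_\sigma$.
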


We shall obtain Theorem \ref{main} by calculating the functional dimensions of $\fH$ and $\fF$ separately.

\begin{lem}\label{lemH3}
The functional dimension of $\fH$ is given by
\be
\ddim(\fH) = \dim(M) - D
\quad \hbox{with}\quad
D:=  \dim(M_\sigma^*/ T_\sigma).
\label{ddimH}\ee
\end{lem}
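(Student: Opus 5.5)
The plan is to show that on the dense open set $M_\sigma$ the differentials of the elements of $\fH$ span the same subspace of $T_p^*M$ as the differentials of the components $\langle Z,\mu_\sigma\rangle=\Psi_Z\circ\cJ_\sigma$ ($Z\in\ft_\sigma$) of the GS momentum map \eqref{musi}. The dimension of that span is then read off from the $T_\sigma$ action of Theorem \ref{thmT13}. Note first that $\dim(M)-D$ is exactly the dimension of a principal $T_\sigma$ orbit, since $T_\sigma$ acts on $M_\sigma^*$ with orbits of constant dimension and $D=\dim(M_\sigma^*/T_\sigma)$.

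\emph{Step 1 (comparison of differentials).} Fix $p\in M_\sigma$ and put $Y_0=\cJ(p)\in\Sigma_\sigma$. For $\cH=F\circ\cJ$ with $F\in C^\infty(\fg)^G$ we have $d\cH_p=\langle\nabla F(Y_0),d\cJ_p(\cdot)\rangle$. By Lemma \ref{lemT1}, $\nabla F(Y_0)$ lies in $\zeta(\fg_{Y_0})$. Using equivariance as in \eqref{T12}, we get $\zeta(\fg_{Y_0})=\Ad_{g(Y_0)^{-1}}(\zeta(\fg_\sigma))=\Ad_{g(Y_0)^{-1}}(\ft_\sigma)$ by \eqref{T7}. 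By the well-known fact recalled after Lemma \ref{lemT1}, these gradients in fact span all of $\zeta(\fg_{Y_0})$ as $F$ ranges over the invariant polynomials. On the other hand, Proposition \ref{propT8} gives $\nabla\Psi_Z(Y_0)=\Ad_{g(Y_0)^{-1}}(Z)$, and these also fill out $\Ad_{g(Y_0)^{-1}}(\ft_\sigma)$ as $Z$ ranges over $\ft_\sigma$. Hence
\be
\span\{d\cH_p\mid \cH\in\fH\}=\span\{d\langle Z,\mu_\sigma\rangle_p\mid Z\in\ft_\sigma\},\qquad \forall p\in M_\sigma .
\ee
The $\Psi_Z$ are themselves only known to be smooth near $\Sigma_\sigma$, not globally. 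This does not matter, because only the spans of differentials are compared, and pointwise equality suffices.

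\emph{Step 2 (rank of the momentum map).} Let $Z_M$ denote the vector field generating the $T_\sigma$ action \eqref{GSact} with respect to $Z\in\ft_\sigma$. By Theorem \ref{thmT13}, $d\langle Z,\mu_\sigma\rangle=\iota_{Z_M}\omega$ up to sign. Since $\omega_p$ is nondegenerate, the map $Z\mapsto d\langle Z,\mu_\sigma\rangle_p$ has the same kernel as $Z\mapsto Z_M(p)$. Therefore its image has dimension $\dim(T_\sigma\star p)$. On the principal orbit type stratum $M_\sigma^*$ this dimension is constant and equals $\dim(M_\sigma^*)-\dim(M_\sigma^*/T_\sigma)=\dim(M)-D$.

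\emph{Step 3 (conclusion).} $M_\sigma^*$ is dense and open in $M$, and by Steps 1--2 the differentials of $\fH$ span a subspace of dimension $\dim(M)-D$ at every point of $M_\sigma^*$. Hence $\fH$ has a well-defined functional dimension, and \eqref{ddimH} holds.

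The main point requiring care is Step 1: the identification $\zeta(\fg_{Y_0})=\Ad_{g(Y_0)^{-1}}(\ft_\sigma)$, together with the fact that invariant polynomial gradients span the whole center. Both follow from Lemma \ref{lemT1}, \eqref{T7} and \eqref{T12}, plus the cited standard spanning result. Everything else is formal.
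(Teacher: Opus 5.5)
Your proof is correct. Its skeleton matches the paper's: on $M_\sigma$, identify the span of the differentials of $\fH$ with the span of the differentials of the components of $\mu_\sigma$, then read off the rank from the principal $T_\sigma$-orbits.

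Where you differ is in how the two inclusions are obtained.
\begin{enumerate}
\item The paper gets ``$\subseteq$'' from the factorization $H\circ\cJ = H\circ\Psi\circ\cJ$. On $M_\sigma^*$ this exhibits every element of $\fH$ as a smooth function of $\mu_\sigma$, so it needs neither Lemma~\ref{lemT1} nor formula \eqref{T19}.
\item The paper gets ``$\supseteq$'' by extending the restrictions $\Psi_Z\vert_{\Sigma_\sigma}$ locally to elements of $C^\infty(\fg)^G$, using the extension results for proper actions in \cite{OR}. Each $\langle Z,\mu_\sigma\rangle$ then agrees with an element of $\fH$ on an open set around any given point of $M_\sigma$.
\end{enumerate}

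You instead argue pointwise in $\fg$. Both spans are the image of the same subspace $\zeta(\fg_{Y_0})=\Ad_{g(Y_0)^{-1}}(\ft_\sigma)$ under the map $\xi\mapsto\langle \xi, D\cJ(p)(\cdot)\rangle$. This uses Lemma~\ref{lemT1}, \eqref{T7}, \eqref{T19} and the spanning property quoted after Lemma~\ref{lemT1}.

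This route is transparent and yields Corollary~\ref{corH4} at every point of $M_\sigma$ in one stroke. Your Step~2 also makes explicit the identification $\rank(D\mu_\sigma(p))=\dim(T_\sigma\star p)$, which the paper uses tacitly.

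The price is that your ``$\supseteq$'' rests on the spanning fact, which the paper only cites as well known. You can remove that dependence with Proposition~\ref{propT8}. Multiply $\Psi_Z$ by a $G$-invariant cut-off function that equals $1$ near $G\cdot X_0$ and has compact support in the invariant neighbourhood $V$; such a function exists by averaging over $G$. The product is an element of $C^\infty(\fg)^G$ whose gradient at $Y_0$ is $\Ad_{g(Y_0)^{-1}}(Z)$.
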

\begin{proof}
For any $H\in C^\infty(\fg)^G$, we have
\be
H \circ \cJ = H \circ \Psi \circ \cJ,
\label{H4}\ee
and after restriction on the dense open submanifold $M_\sigma^* \subset M$ the map $\Psi \circ \cJ$
becomes the momentum map of the action \eqref{GSact} of $T_\sigma$.
 This momentum map, $\mu_\sigma = \Psi_\sigma \circ \cJ_\sigma$, has $\dim(M) - D$ functionally independent components
on $M_\sigma^*$, since this is the dimension of generic orbits of the $T_\sigma$ action.
We obtain immediately that
\be
\ddim(\fH) \leq  \dim(M) -D.
\label{H5}\ee
To see the equality, consider the restriction of the functions $\Psi_Z$ on $\Sigma_\sigma$ \eqref{T13}
for every $Z\in \ft_\sigma$.
These restricted functions belong to $C^\infty(\Sigma_\sigma)^G$.
It follows from standard extension properties of invariant functions under proper actions \cite[Props.~2.5.6 and 2.5.7]{OR}
that for any $Y_0 \in \Sigma_\sigma$ there exists a $G$-invariant open subset $U$ of $\Sigma_\sigma$ containing $Y_0$
and a function $H\in  C^\infty(\fg)^G$
that coincides with $\Psi_Z$ on $U$.
This implies that the restriction of $\fH$ on $M_\sigma$ has at least the same functional
dimension as the algebra of functions generated by $\mu_\sigma$ \eqref{musi}.
Taking into account that $M_\sigma^* \subset M_\sigma \subset M$ are dense open submanifolds,
the  proof is complete.
\end{proof}

The proof of Lemma \ref{lemH3} also shows that the following statement is valid.

\begin{cor}\label{corH4}
At every point of $M_\sigma^*$,
the span of the differentials of the elements of $\fH$ is the same as the span of
the differentials of the components of the $T_\sigma$ momentum map $\mu_\sigma$ \eqref{musi}.
Then,  continuity implies the same property on $M_\sigma$ as well.
\end{cor}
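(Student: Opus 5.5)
The plan is to prove the two inclusions between the spans directly at an arbitrary point $p\in M_\sigma$. Since $M_\sigma$ is open in $M$, differentials of functions on $M$ at $p$ can be computed from their restrictions to $M_\sigma$. Nothing specific to $M_\sigma^*$ is needed, so the argument should give the statement on $M_\sigma$ directly; the appeal to continuity then becomes unnecessary.

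\emph{First inclusion.} Take $H\in C^\infty(\fg)^G$. By $G$-invariance, $H\vert_{\Sigma_\sigma} = h\circ \Psi_\sigma$, where $h := H\vert_\sigma$. The function $h$ is smooth on the open face $\sigma$, viewed as an open subset of the affine space $X_0+\ft_\sigma$ as in Remark \ref{remT3}, and $\Psi_\sigma$ is smooth by Theorem \ref{thmT2}. Hence on $M_\sigma$ we get $H\circ\cJ = h\circ\mu_\sigma$, as in \eqref{H4}. Applying the chain rule at $p$ gives
\[
d(H\circ \cJ)(p) = \sum_{i} \partial_i h(\mu_\sigma(p))\, d\langle Z_i,\mu_\sigma\rangle(p),
\]
where $\{Z_i\}$ is a basis of $\ft_\sigma$ and $\partial_i$ denotes the derivative of $h$ along $Z_i$ in the affine chart of $\sigma$. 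So every differential of an element of $\fH$ lies in the span of the differentials of the components of $\mu_\sigma$.

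\emph{Reverse inclusion.} Fix $Z\in\ft_\sigma$ and set $Y_0:=\cJ(p)\in\Sigma_\sigma$. As in the proof of Lemma \ref{lemH3}, the extension results \cite[Props.~2.5.6 and 2.5.7]{OR} provide a $G$-invariant open subset $U\subset \Sigma_\sigma$ containing $Y_0$ and a function $H\in C^\infty(\fg)^G$ with $H\vert_U = \Psi_Z\vert_U$. The set $\cJ_\sigma^{-1}(U)$ is an open neighbourhood of $p$ in $M_\sigma$, because $\cJ_\sigma$ is continuous into $\Sigma_\sigma$. On this neighbourhood $H\circ\cJ = \langle Z,\mu_\sigma\rangle$, so the two differentials agree at $p$. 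Thus each $d\langle Z,\mu_\sigma\rangle(p)$ is the differential of an element of $\fH$.

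The only delicate point is the smoothness used in the first inclusion. One must check that the restriction $h=H\vert_\sigma$ is genuinely smooth as a function on the manifold $\sigma$, so that the chain rule is legitimate. This holds because $\sigma$ is an embedded submanifold of $\fg$ and $H$ is smooth on $\fg$. Granted this, the two inclusions give equality of the spans at every point of $M_\sigma$, and in particular at every point of $M_\sigma^*$.
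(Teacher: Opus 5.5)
Your proof is correct and essentially the paper's: the paper simply notes that the proof of Lemma~\ref{lemH3} already gives the result, using the identity \eqref{H4} (so elements of $\fH$ are functions of $\mu_\sigma$ on $M_\sigma$) for one inclusion and the extension result of \cite{OR} applied to $\Psi_Z\vert_{\Sigma_\sigma}$ for the other. You are also right that both steps work verbatim at every point of $M_\sigma$, which is cleaner than appealing to continuity; one harmless slip is that in your chain-rule formula the $Z_i$ inside $d\langle Z_i,\mu_\sigma\rangle$ should be the dual basis with respect to $\langle-,-\rangle$.
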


For any natural number $j$ and constant $r>0$,  let $B^j_r \subset \bR^j$ denote the
open ball of radius $r$ around the origin.

\begin{lem}\label{lemH5}
Maintaining the previous notations and assumptions,
the centralizer $\fF$ \eqref{I3} of $\fH$ \eqref{I5} satisfies
$\ddim(\fF) = D$ with $D$ in \eqref{ddimH}.
\end{lem}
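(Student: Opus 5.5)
We will prove $\ddim(\fF)=D$ by establishing the two inequalities separately, working on the dense open connected submanifold $M_\sigma^*$. By Corollary~\ref{corH4}, at every point of $M_\sigma^*$ the differentials of the elements of $\fH$ span the same subspace as the differentials of the components of $\mu_\sigma$. Hence a function $\cF\in C^\infty(M)$ Poisson commutes with all of $\fH$ exactly when, on $M_\sigma^*$, it Poisson commutes with every component $\langle Z,\mu_\sigma\rangle$, $Z\in\ft_\sigma$. Because the integral curves of these components generate the GS action \eqref{GSact} (Theorem~\ref{thmT13}), this is equivalent to $\cF\vert_{M_\sigma^*}$ being $T_\sigma$-invariant, using connectedness of $T_\sigma$. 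By density of $M_\sigma^*$ and continuity, $\fF$ is therefore precisely the set of smooth functions on $M$ whose restriction to $M_\sigma^*$ is $T_\sigma$-invariant.

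For the upper bound, the differential of a $T_\sigma$-invariant function at $p\in M_\sigma^*$ annihilates the tangent space of the orbit $T_\sigma\star p$. On the principal orbit type stratum all orbits have the generic dimension $\dim(M)-D$. So the differentials of elements of $\fF$ lie in a subspace of $T_p^*M$ of dimension $D$, which gives $\ddim(\fF)\le D$.

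The lower bound is the main obstacle. At a generic point we must produce $D$ elements of $\fF$, globally smooth on $M$, whose differentials are independent there. We plan to use the principal orbit type structure. Fix $p_0\in M_\sigma^*$. By the slice theorem for the compact group $T_\sigma$, a $T_\sigma$-invariant neighbourhood of the orbit through $p_0$ is equivariantly diffeomorphic to $T_\sigma/\Gamma\times B^D_r$, where $\Gamma$ is the (finite) principal isotropy group, acting trivially on the slice because we are on the principal stratum. Local coordinates $x_1,\dots,x_D$ on $B^D_r$ then pull back to $T_\sigma$-invariant functions with independent differentials. Next we multiply them by a cutoff $\chi(|x|^2)$ with compact support in $B^D_r$, equal to $1$ near the origin. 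The resulting functions are $T_\sigma$-invariant and smooth on the tube and vanish near its boundary, so they extend by zero to smooth functions on $M$. Their support is the closure of a tube around a compact orbit, so it is compact and contained in $M_\sigma^*$, and the extension by zero is smooth. The extended functions are $T_\sigma$-invariant on $M_\sigma^*$, hence belong to $\fF$ by the first step, and their differentials are independent near $p_0$.

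Doing this at every point of the dense set $M_\sigma^*$ yields $\ddim(\fF)\ge D$, and combined with the upper bound this gives $\ddim(\fF)=D$. The one detail to double-check is that the radius $r$ is chosen small enough for the tube, including the cutoff support, to stay inside the open set $M_\sigma^*$. Adding Lemma~\ref{lemH3} then gives Theorem~\ref{main}.
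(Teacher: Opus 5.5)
Your proposal is correct and follows essentially the same route as the paper: you build compactly supported $T_\sigma$-invariant bump functions inside $M_\sigma^*$, extend them by zero to $M$, and place them in $\fF$ via Corollary~\ref{corH4}. The paper differs only in taking these functions from a chart on the quotient $M_\sigma^*/T_\sigma$ instead of the slice theorem, and in leaving implicit the easy upper bound $\ddim(\fF)\le D$, which you usefully make explicit.

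One small slip: the principal isotropy group $\Gamma$ of the GS action need not be finite. It can be positive-dimensional, which is why the paper passes to $\hat T_\sigma$ in \eqref{H6}. This does not affect your argument, since $T_\sigma/\Gamma$ still has dimension $\dim(M)-D$ and the slice still has dimension $D$.
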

\begin{proof}
Let $\pi: M_\sigma^* \to M_\sigma^*/T_\sigma$ be the canonical projection
and choose an arbitrary  point $x_0\in M_\sigma^*$.  With $y_0:=\pi(x_0)$,
introduce a coordinate system $(U, \chi_1,\dots, \chi_D)$ on $M_\sigma^*/T_\sigma$ that
is centered on $y_0$ and the image of the coordinate map $\chi:= (\chi_1,\dots, \chi_D): U \to \bR^D$
is an open ball $B_1^D$ of radius $1$.  Then, one can find functions $f_i \in C^\infty(M_\sigma^*/T_\sigma)$ that
coincide with the $\chi_i$ on $\chi^{-1}(B_{1/3}^D)$ and are identically zero outside $\chi^{-1}(B_{2/3}^D)$.
By using such functions $f_i$,
let us define  the functions $\cF_i$ on $M$ by setting them equal to
$\pi^*(f_i)$ on $M_\sigma^*$ and defining them to be identically zero outside $M_\sigma^*$.
It is clear that this gives $\cF_i\in C^\infty(M)$ and the $D$-functions are independent
at $x_0$ (actually also on $\pi^{-1}(\chi^{-1}(B_{1/3}^D))$.
The functions $\cF_i$  Poisson commute with the component functions of the momentum map $\mu_\sigma$
on $M_\sigma^*$, since their restrictions on $M_\sigma^*$ are $T_\sigma$ invariant.
Thus, we see from Corollary \ref{corH4}  that the $\cF_i$
also Poisson commute with the elements of $\fH$ on $M_\sigma^*$.
Moreover, the $\cF_i$  are identically zero on an open neighbourhood of every point
outside the $M_\sigma^*$, since every such point belongs to the complement of the closed set given by the
joint support of the functions  $\cF_i$.
As a result, $\{\cF_i, \cH\} = 0$ identically on $M$ for every $\cH \in \fH$, i.e.,
we proved that $\cF_i \in \fF$.
The upshot is that the exterior derivatives of the elements $\fF$ span a $D$-dimensional subspace of
$T_{x_0}^*M$ at every point $x_0\in M_\sigma^*$, which proves the claim.
\end{proof}

By proving the preceding lemmas, we have now proved  Theorem \ref{main}.
The next proposition characterizes the rank of the superintegrable system, $\ddim(\fH)$,
as the difference between the dimensions of the typical isotropy groups for $G$ acting
 on $\cJ(M)$ and on $M$.

\begin{prop}\label{propBJ}
The equality  $\ddim(\fH) = \dim(G_\sigma) - \dim(G_{x_0})$ holds,
where $G_{x_0}$ is an isotropy group of minimal dimension for the $G$ action on $M$ and
$G_\sigma$ is the isotropy group of the elements of the
the principal open face $\sigma$ given by Theorem \ref{thmT9}.
\end{prop}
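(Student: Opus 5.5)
The plan is to compute both sides at a generic point $y$ with $\cJ(y)\in\sigma$ and reduce the claim to a linear-algebra identity inside $\fg_\sigma$. By $G$-equivariance of $\cJ$ and $G$-invariance of $\fH$, this loses no generality: $M_\sigma = G\cdot\cJ^{-1}(\sigma)$, and functional dimensions and isotropy dimensions are constant along $G$-orbits. At such $y$ we have $G_y\subset G_{\cJ(y)}=G_\sigma$. Moreover, $g(\cJ(y))=e$ can be chosen, so the GS action of $T_\sigma$ near $y$ is just the restriction of the original action.

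\textbf{Step 1 (left-hand side).} By Corollary \ref{corH4}, the span of $d\fH$ at $y$ equals the span of $d(\Psi_Z\circ\cJ)$ for $Z\in\ft_\sigma$. By Proposition \ref{propT8} with $g(\cJ(y))=e$, this span is $\{\langle d\cJ_y(\cdot),Z\rangle \mid Z\in\ft_\sigma\}$. The standard momentum map identity $\Im(d\cJ_y)=\ann(\fg_y)$ shows that the kernel of $Z\mapsto\langle d\cJ_y,Z\rangle$ is $\fg_y$. Hence
\[
\ddim(\fH)=\dim\ft_\sigma-\dim(\ft_\sigma\cap\fg_y)
\]
for generic $y$, namely for $y\in M^*_\sigma\cap\cJ^{-1}(\sigma)$.

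\textbf{Step 2 (structure of $\fg_y$).} I will show that $\fk:=[\fg_\sigma,\fg_\sigma]\subset\fg_y$ for all $y\in\cJ^{-1}(\sigma)$. Using the Guillemin--Sternberg symplectic cross-section, take a small $G_\sigma$-invariant slice $S\subset\fg_\sigma$ around $\sigma$, consisting of elements whose $G$-isotropy is contained in $G_\sigma$. Then $Y:=\cJ^{-1}(S)$ is a $G_\sigma$-Hamiltonian submanifold with momentum map $\cJ|_Y$, and $G\cdot Y$ is open in $M$. Every element of $S$ is $G_\sigma$-conjugate into $\ft_+$ near $\sigma$. By Theorem \ref{thmT9}, the points of $\cJ(M)\cap\ft_+$ lie in $F_S$, and near $\sigma$ the only such points are in $\sigma$ itself. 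Hence $\cJ(Y)\subset\sigma\subset\ft_\sigma$, and in fact $Y=\cJ^{-1}(\sigma)$ near the chosen point. Applying $\Im(d(\cJ|_Y)_y)=\ann_{\fg_\sigma}((\fg_\sigma)_y)$, the image lies in $\ft_\sigma\simeq\ann(\fk)$ (using the Killing form decomposition $\fg_\sigma=\fk\oplus\ft_\sigma$ with $\ft_\sigma$ central, by \eqref{T7}). Therefore $\fk\subset(\fg_\sigma)_y=\fg_y$. Since $\fg_y\subset\fg_\sigma=\fk\oplus\ft_\sigma$ and $\fg_y\supset\fk$, we get $\fg_y=\fk\oplus(\ft_\sigma\cap\fg_y)$. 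Consequently
\[
\dim\fg_\sigma-\dim\fg_y=\dim\ft_\sigma-\dim(\ft_\sigma\cap\fg_y).
\]

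\textbf{Step 3 (genericity).} The minimal isotropy dimension on $M$ is attained on an open dense set, which meets the open set $G\cdot Y$; by equivariance it is therefore attained at some $y\in\cJ^{-1}(\sigma)$, and likewise on a dense subset of $\cJ^{-1}(\sigma)$. By Step 1, $\ddim(\fH)$ equals $\dim\ft_\sigma-\dim(\ft_\sigma\cap\fg_y)$ on a dense open set, and this quantity is in any case at most $\dim\ft_\sigma-\min_y\dim(\ft_\sigma\cap\fg_y)$. Combining with Step 2, at points $y$ of minimal $\dim\fg_y$ both quantities agree, which gives $\ddim(\fH)=\dim G_\sigma-\dim G_{x_0}$.

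The main obstacle is Step 2: justifying rigorously that the cross-section $Y$ has momentum image inside $\sigma$. This needs the slice $S$ to be chosen small enough that $G_\sigma$-conjugation into $\ft_+$ lands near $\sigma$, where Theorem \ref{thmT9} together with Remark \ref{remT5} excludes all faces other than $\sigma$.
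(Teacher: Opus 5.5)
Your proof is correct, but it is organized dually to the paper's.

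The paper works on the image side. From $\ann(\fg_x)=\Im(D\cJ(x))$ it gets $\dim(G\cdot x)=\rank(D\cJ(x))$. The product structure $\Sigma_\sigma\simeq G/G_\sigma\times\sigma$ splits $\cJ_\sigma=(L_\sigma,\mu_\sigma)$ with $L_\sigma$ equivariant, so $\rank(D\cJ_\sigma(x))=\dim(G/G_\sigma)+\rank(D\mu_\sigma(x))$. This gives $\rank(D\mu_\sigma(x))=\dim(G_\sigma)-\dim(G_x)$ on all of $M_\sigma$, and Lemma \ref{lemH3} finishes.

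You work on the isotropy side. At $y\in\cJ^{-1}(\sigma)$ you use \eqref{T19} with $g(\cJ(y))=e$ to get $\rank(D\mu_\sigma(y))=\dim(\ft_\sigma)-\dim(\ft_\sigma\cap\fg_y)$. You then convert this into $\dim(\fg_\sigma)-\dim(\fg_y)$ through the splitting $\fg_y=[\fg_\sigma,\fg_\sigma]\oplus(\ft_\sigma\cap\fg_y)$.

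That splitting is what your route buys beyond the paper's. It says that the commutator subgroup of $G_\sigma$ acts trivially on $\cJ^{-1}(\sigma)$. So the full isotropy algebra of the $G$ action there is $[\fg_\sigma,\fg_\sigma]$ plus the isotropy algebra of the GS torus, which ties the group $(T_\sigma)_{x_0}$ of \eqref{H6} directly to $G_{x_0}$.

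The cost is your appeal to the symplectic cross-section theorem. It is external to the paper and not actually needed. Since $\cJ$ maps the open set $M_\sigma$ into the embedded submanifold $\Sigma_\sigma$, you have $\ann(\fg_y)=\Im(D\cJ(y))\subset T_{\cJ(y)}\Sigma_\sigma=[\fg,\cJ(y)]\oplus\ft_\sigma$. The right-hand side is exactly the Killing-orthogonal complement of $[\fg_\sigma,\fg_\sigma]$, because $[\fg,\cJ(y)]=\fg_\sigma^\perp$ and $\fg_\sigma=[\fg_\sigma,\fg_\sigma]\oplus\ft_\sigma$ orthogonally by \eqref{T7}. Hence $[\fg_\sigma,\fg_\sigma]\subset\fg_y$ follows in one line from the same input the paper uses.

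With that, your Steps 1--2 reproduce the paper's pointwise identity $\rank(D\mu_\sigma)=\dim(G_\sigma)-\dim(G_y)$. Your Step 3 is phrased loosely; replace it with the paper's argument. The rank of $D\mu_\sigma$ is constant on the $G$-invariant open dense set $M_\sigma^*$ and equals $\ddim(\fH)$ there. It can therefore be evaluated at a point of $M_\sigma^*$ where $\dim(G_y)$ is minimal, and such points exist because the minimal-isotropy set is also open dense.
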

\begin{proof}
For any  $x\in M$ with isotropy group $G_x$,
the standard relation \cite{OR}  $\ann(\fg_x) = \Im (D\cJ(x))$ implies the equality
$ \dim(G\cdot x)=\rank(D\cJ(x)) $.
For $x\in M_\sigma$, we have  $\Im (D\cJ(x) )= \Im (D \cJ_\sigma(x))$.
By using that $\Sigma_\sigma$ (which contains $\cJ_\sigma(M_\sigma)$) is diffeomorphic to the product manifold $G/G_\sigma \times \sigma$,
$J_\sigma$  takes the form $\cJ_\sigma = (L_\sigma, \mu_\sigma)$
with a $G$-equivariant smooth map $L_\sigma: M_\sigma \to G/G_\sigma$.
This leads to
\be
\dim(G\cdot x) = \rank(D \cJ_\sigma(x)) = \dim(G/G_\sigma) + \rank (D \mu_\sigma(x)),
\ee
equivalently written as
\be
\rank (D\mu_\sigma(x)) = \dim(G_\sigma) - \dim (G_x), \qquad \forall x \in M_\sigma.
\ee
The rank is constant  over the dense open submanifold of $M_\sigma$ where $\dim(G_x)$ is minimal.
This constant  is  equal to the dimension of the generic $T_\sigma$ orbits,
since their tangent spaces are spanned by the Hamiltonian vector fields generated
by $\mu_\sigma$.
On account \eqref{ddimH}, the claim follows.
\end{proof}

\subsection{Generalized action-angle coordinates}
\label{subsec32}

The action \eqref{GSact} of $T_\sigma$ on $M_\sigma^*$ is not necessarily free, but the isotropy group is constant over $M_\sigma^*$
simply since $T_\sigma$ is Abelian.
Let us again choose an arbitrary point $x_0$ in $M_\sigma^*$ and denote $(T_\sigma)_{x_0}$ its isotropy group.
Then, define the factor group
\be
\hat T_\sigma := T_\sigma/ (T_\sigma)_{x_0}.
\label{H6}\ee
This is a torus of dimension
\be
\ell := \dim(M_\sigma^*) - \dim(M_\sigma^*/T_\sigma) = \dim(M) - D = \ddim(\fH),
\label{H7}\ee
which acts \emph{effectively on $M_\sigma$ and freely}
on $M_\sigma^*$.
It is easy to see that the induced torus action is Hamiltonian, and the corresponding momentum map is provided by
\be
\hat \mu_\sigma := \mu_\sigma - \mu_\sigma(x_0).
\label{H8}\ee
Notice that $\langle \mu_\sigma, X  \rangle$ is constant over $M_\sigma$ for every element $X$ from the Lie algebra $(\ft_{\sigma})_{x_0}$
of the isotropy group $(T_\sigma)_{x_0}$, because its Hamiltonian vector field vanishes on the connected symplectic manifold $M_\sigma$.
It follows that $\hat \mu_\sigma$ varies in the annihilator of the isotropy Lie algebra $(\ft_{\sigma})_{x_0}$ inside $\ft_\sigma^* \simeq \ft_\sigma$.
By standard linear algebra, this annihilator is naturally isomorphic to the dual space of the
Lie algebra $\hat \ft_\sigma \equiv \ft_\sigma/(\ft_\sigma)_{x_0}$
of the factor group $\hat T_\sigma$.

The  next theorem deals with
generalized action-angle coordinates.
Recall that $B_r^j$ denotes an open ball as defined before Lemma \ref{lemH5}.

\begin{thm}\label{thmaction}
With $\ell = \ddim(\fH)$ in \eqref{H7},
put $k:= \frac{1}{2}(\dim(M) - 2\ell)$. Choose $x_0\in M_\sigma^*$ and a basis $Y_1,\dots, Y_\ell$ of the integral lattice
of the torus $\hat T_\sigma$ \eqref{H6}.
Then,  there exists a $\hat T_\sigma$  invariant open neighbourhood $\cU$ of $x_0$ in $M_\sigma^*$
and  a diffeomorphism
\be
\phi: \cU \to  \hat T_\sigma \times B^\ell_{r_1} \times B^{2k}_{r_2}
\label{H9}\ee
for some $r_1, r_2> 0$, so that $\phi(x_0) = (e, 0,0)$  and the following properties are satisfied.
First, the push-forward of the restriction of the symplectic form $\omega$ on $\cU$ can be written as
\be
\phi_* (\omega_{\vert \cU}) =\sum_{i=1}^\ell d I_i \wedge d \theta_i + \sum_{j =1}^{k} dp_k \wedge d q_k,
\label{H10}\ee
where $I_1,\dots, I_\ell$ and $q_1,\dots, q_k, p_1, \dots, p_k$ are coordinates on $B^\ell_{r_1}$ and $B^{2k}_{r_2}$.
Second, the $\theta_i$ are  angles parametrizing  the torus $\hat T_\sigma \simeq {\mathrm{U}}(1)^\ell$ \eqref{H7}
and the canonical conjugates of  the coordinate functions $\theta_i \circ \phi$  are given by the momentum map \eqref{H8}
of  the $\hat T_\sigma$  action according to
\be
I_i \circ \phi =   \langle \hat \mu_\sigma, Y_i \rangle.
\label{H11}\ee
Third, the action coordinates $I_i \circ \phi$ can be expressed in terms of  elements of $\fH_{\vert \cU}$
and the `transversal coordinates' $p_k\circ \phi, q_k \circ \phi$ can be expressed in terms of $\fF_{\vert \cU}$.
 \end{thm}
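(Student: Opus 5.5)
The plan is to apply the local normal form for free Hamiltonian torus actions, i.e.\ a local version of symplectic reduction with a symplectic cross-section. By \eqref{H7}, the torus $\hat T_\sigma$ acts freely on $M_\sigma^*$ and has momentum map $\hat\mu_\sigma$ \eqref{H8}. Its components $\langle\hat\mu_\sigma,Y_i\rangle$, $i=1,\dots,\ell$, generate $2\pi$-periodic flows that together give the free action. The differentials of these components are therefore linearly independent at every point of $M_\sigma^*$: a linear relation among them would force a nonzero element of $\hat\ft_\sigma$ to have a vanishing fundamental vector field, which contradicts freeness. So $\hat\mu_\sigma$ is a submersion onto an open subset of $\hat\ft_\sigma^*\simeq\bR^\ell$ near $x_0$. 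I will set $I_i:=\langle\hat\mu_\sigma,Y_i\rangle$, which vanish at $x_0$.

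Next, I will choose a local slice. Since the action is free and proper, there is a submanifold $S\ni x_0$ of dimension $\dim(M)-\ell$ transversal to the orbits, and $\hat T_\sigma\times S\to M_\sigma^*$ is a diffeomorphism onto an invariant neighbourhood (tube theorem). To get canonical angles, I need a \emph{Lagrangian-type} choice. The level set $N:=S\cap\hat\mu_\sigma^{-1}(0)$ has dimension $2k$. Restricted to $N$, $\omega$ is symplectic, because $N$ is locally diffeomorphic to the reduced space $\hat\mu_\sigma^{-1}(0)/\hat T_\sigma$. I will pick Darboux coordinates $(q,p)$ on $N$ near $x_0$. I will then construct a local section $s$ of $\hat\mu_\sigma$ over $B^\ell_{r_1}\times B^{2k}_{r_2}$ such that the map $(I,q,p)\mapsto s(I,q,p)$ pulls $\omega$ back to $\sum dp_j\wedge dq_j$. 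The angle coordinates $\theta$ are then defined by $x=\exp(\theta\cdot Y)\star s(\hat\mu_\sigma(x),q,p)$. Invariance of $\omega$ and $\iota_{Y_i^\sharp}\omega=dI_i$ give
\[
\omega=\sum_i dI_i\wedge d\theta_i+s^*\omega ,
\]
where $\theta_i$ are the coordinates along the orbit. So I need $s^*\omega=\sum dp_j\wedge dq_j$, i.e.\ the section must be chosen so that no $dI$ terms survive.

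This step is the main obstacle. I will handle it with the standard Marle--Guillemin--Sternberg (or Arnold--Liouville-type) normal form. First take any section $s_0$ through the $(q,p)$-family, and extend $(q,p)$ to $\hat T_\sigma$-invariant functions on a neighbourhood. Then apply the equivariant relative Darboux--Moser argument: the symplectic forms $\omega$ and $\sum dI\wedge d\theta+\sum dp\wedge dq$ agree on the orbit of $x_0$ and have the same momentum map, so they are equivariantly symplectomorphic near the orbit by a map preserving $\hat\mu_\sigma$. Equivalently, one can correct $s_0$ by the flow of a function of $(I,q,p)$, which kills the closed mixed terms by the Poincaré lemma on the ball. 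This produces $\phi$ and gives \eqref{H10} and \eqref{H11}.

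Finally, for the third claim, the $I_i$ are components of $\hat\mu_\sigma$. By Corollary~\ref{corH4} and Lemma~\ref{lemH3}, locally these coincide with restrictions of elements of $\fH$: the extension argument in that proof gives $H\in C^\infty(\fg)^G$ with $H\circ\cJ=\Psi_Z\circ\cJ$ on an invariant open set, and subtracting the constant $\mu_\sigma(x_0)$ is harmless. The functions $q_j\circ\phi,p_j\circ\phi$ are $\hat T_\sigma$-invariant on $\cU$, hence $T_\sigma$-invariant. Shrinking $\cU$ and using a bump-function cutoff on $M_\sigma^*/T_\sigma$ exactly as in the proof of Lemma~\ref{lemH5}, they agree on $\cU$ with elements of $\fF$. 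They therefore lie in $\fF_{\vert\cU}$, which completes the plan.
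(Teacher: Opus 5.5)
Your proposal is correct in substance but takes a different route from the paper. The paper constructs nothing explicitly. It notes that the components of $\hat\mu_\sigma$ are action variables for $(M,\omega,\fH,\fF)$ in the sense of \cite[Defn.~1.2]{FF}, and then invokes \cite[Thm.~2.15 and Cor.~2.17]{FF}, i.e.\ Nekhoroshev's generalized action--angle theorem in the version where a torus action supplies the action variables. You instead prove the first two claims directly, as the local normal form of a free Hamiltonian torus action: a slice, a Darboux chart on the reduced piece $N$, and an equivariant Moser isotopy preserving $\hat\mu_\sigma$. You then derive the third claim from the local extension step in the proof of Lemma~\ref{lemH3} and the cutoff construction of Lemma~\ref{lemH5}. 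Your route is self-contained and shows that \eqref{H10}--\eqref{H11} depend only on the freeness of the $\hat T_\sigma$ action on $M_\sigma^*$; the structure of $\fH$ and $\fF$ enters only in the third claim. The paper's route is a two-line reduction to a general theorem and places the result within the standard theory of superintegrable systems.

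Two details of your Moser step should be tightened.

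First, with $\theta$ defined through an arbitrary section $s_0$, the forms $\sum_i dI_i\wedge d\theta_i + s_0^*\omega$ and $\sum_i dI_i\wedge d\theta_i+\sum_j dp_j\wedge dq_j$ need not agree along the orbit of $x_0$. The reason is that $s_0^*\omega$ can have $dI\wedge dI$, $dI\wedge dq$ and $dI\wedge dp$ components there. This is harmless, for the following reasons:
\begin{itemize}
\item For a form $\gamma$ in the variables $(I,q,p)$ only, $\sum_i dI_i\wedge d\theta_i+\gamma$ is nondegenerate iff $\gamma$ restricts to a nondegenerate form on the levels $I=\mathrm{const}$.
\item On the level $I=0$, $s_0^*\omega$ restricts to $\sum_j dp_j\wedge dq_j$, so every linear interpolant does too. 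By continuity the interpolants stay nondegenerate on nearby levels, for $(q,p)$ near $0$.
\item If you take the primitive of $s_0^*\omega-\sum_j dp_j\wedge dq_j$ from the radial Poincar\'e lemma, the Moser field vanishes on the orbit of $x_0$. This gives $\phi(x_0)=(e,0,0)$ and a flow defined up to time $1$ near the compact orbit.
\end{itemize}

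Second, your ``equivalent'' alternative is not equivalent on its own. Replacing $s_0$ by $\exp(\varphi(I,q,p)\cdot Y)\star s_0$ changes $s_0^*\omega$ only by $\pm\sum_i d\varphi_i\wedge dI_i$. So it cannot correct the symplectic forms that $s_0$ induces on the levels $\hat\mu_\sigma=c\neq 0$. It works only after those forms have been normalized by a Darboux argument with parameters. That is exactly what the Moser isotopy accomplishes, since it moves $q,p$ as well as $\theta$.
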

\begin{proof}
 Basically, the claims are consequences of the well-known generalized action-angle theorem  of Nekhoroshev \cite{Nek}.
More concretely,
the statements follow directly from \cite[Thm.~2.15 and Corr.~2.17]{FF} by using the fact that
the components of the momentum map $\hat \mu_\sigma$ \eqref{H8} provide action variables on $M_\sigma \subset M$ for the superintegrable system
$(M, \omega, \fH, \fF)$ in the sense of \cite[Defn.~1.2]{FF}.
\end{proof}

Superintegrable systems  on symplectic manifolds always admit
generalized action-angle coordinates  in which the symplectic structure takes the form \eqref{H10} and the  superintegrable Hamiltonians
become functions of the action variables \cite{Nek}.
The key step in the construction of such coordinates is to find the action variables, which is in  general a very non-trivial problem.
In our case, we have identified the action variables as components of the momentum map of the GS torus action,
which are continuous functions globally on the whole phases space and are $C^\infty$  on the dense open
submanifold $M_\sigma$.

\subsection{Comparison with earlier results}
 \label{subsec33}

 We below present a  comparison with related results of
  Bolsinov and Jovanovi\'{c} \cite{BJ}.
  They applied the approach to superintegrability initiated in \cite{MF},
  whose central concept is the so-called `complete algebra of functions'.
  By definition \cite{BJ}, a complete algebra of functions $\cA$ on a symplectic manifold $(M,\omega)$
  is a linear subspace
  of the space of all smooth functions, which is closed under the
  Poisson bracket\footnote{It is optional whether one requires also closure with respect to the pointwise multiplication of functions.}
  and satisfies the condition
  \be
  \ddim(\cA) +\dind(\cA) = \dim(M).
  \label{H12}\ee
 Here, $\dind(\cA)$ (the \emph{differential index} of $\cA$) is the dimension of the kernel $K_x \subset A_x$
 of  the Poisson tensor restricted on
 $A_x \otimes A_x$ with $A_x:= \span\{ df(x)\mid f \in \cA\}$, for generic $x\in M$.
 Equivalently, for the points of a dense open subset of $M$, denoted $\reg(\cA)$, one has
 $\dim(A_x) =\ddim(\cA)$ and
\be
 W_x:= \omega_x^\sharp(A_x) < T_x M
 \label{HJ13}\ee
 is a coisotropic  subspace with respect to the symplectic form.
 As defined in \cite{BJ},
 a Hamiltonian $\cH\in C^\infty(M)$ is called superintegrable if it admits a complete algebra
 of functions consisting of integrals of motion, such that $\ddim(\cA) < \dim(M)/2$.
 Generalized action-angle coordinates  exist
 around the connected components of the compact level surfaces of $\cA$ in $\reg(\cA)$,
and can be used  to characterize  the integral curves of the superintegrable Hamiltonians in the usual way
\cite{BJ, Nek}.

Bolsinov and Jovanovi\'{c}  \cite[Thm.~2.1]{BJ} proved that
\be
\cA:= C^\infty(M)^G + \cJ^* \left(C^\infty(\fg^*)\right)
 \label{H14}\ee
 is a complete algebra of functions.
 Notice that this is an infinite dimensional Lie algebra.
 The proof given in \cite{BJ} is very direct,  and it works for certain Hamiltonian actions of non-compact connected Lie groups
 as well.
 They have also shown that
 \be
 \dind(\cA) = \dim(G_{\mu}) - \dim(G_x),
 \label{H15}\ee
 where $G_x$ and $G_\mu$ are isotropy groups associated with a generic $G$ orbit in $M$ and in $\cJ(M)$, respectively.
 It is also worth observing from the proof in \cite{BJ} that in our case $\reg(\cA)$ is the
 submanifold of the $G$ action inside $M_\sigma$ \eqref{T21} whose points have isotropy groups of minimal dimension.

 Turning to the comparison, observe from our definition \ref{defnI} that $\fF$ \eqref{I3} is a complete algebra of functions.
 With $\fH$ \eqref{I5} and $\cA$ \eqref{H14},  the relation
 \be
 \fH \subset \cA \subset \fF
 \label{H16}\ee
 is obviously valid.
 Moreover, by Proposition \ref{propBJ} and equation \eqref{H15},
 \be
 \ddim(\fF) = \ddim(\cA).
 \ee
 As a consequence, $\cA$ is a complete algebra of integrals of motion for every
 Hamiltonian $\cH \in \fH$,
and the generic integral curves of $\cH\in \fH$ can be characterized, in principle,
 also with the aid of generalized action-angle coordinates associated with $\cA$.

We conclude  that for compact Lie groups the two approaches lead to essentially equivalent results that
are also complementary to each other.
 In the direct approach of \cite{BJ} the complete algebra of constants of motion
is very explicit, while our approach shows that on the regular part the properties of the system
 are encoded by a torus action.
 We remark that the general equality $\dind(\cA) = \ddim(\fH)$ with $\fH$ in \eqref{I5}  was not stated in \cite{BJ,J}.

\section{Discussion}
\label{sec4}

We have shown that the Abelian Poisson algebra $\fH$ \eqref{I5} engenders
a superintegrable system in the sense of definition \ref{defnI} for every Hamiltonian action of a connected compact Lie group $G$ associated
with a simple Lie algebra.
It is easy to see that (after small notational modifications) everything works for reductive compact Lie groups as well.
Our proof relied on the Hamiltonian torus action \eqref{GSact} that directly gives
action variables for these collective superintegrable systems.
This is important since action variables are often useful for semi-classical quantization and also because
the  momentum map \eqref{musi} may induce action variables
for many interesting superintegrable systems that can be obtained by Hamiltonian reduction
from systems covered by our framework.  For example, a family of spin Calogero--Sutherland models result
from reductions of cotangent bundles of compact symmetric spaces $G/K$ by using the action of $K$ inherited
from left-multiplications.
(See \cite{FP} that focused on the non-compact case.)
 A part of the pertinent GS torus action is expected to descend to the
reduced phase space, leading to action variables of those  spin many-body models.
We  plan to elaborate this in a future publication.

Our work complements the earlier results obtained by Bolsinov and Jovanovi\'{c} in \cite{BJ} without
mentioning the GS torus action.
In fact, it has been one of the aims of the present paper to shine a new light on the usefulness of this  tool,
which can also be incorporated into the Hamiltonian reduction approach
to  superintegrable systems based on Poisson--Lie groups
and quasi-Hamiltonian methods \cite{FF}.

Let us end with a  remark on a connection between Thimm's trick \cite{GS2,Lane,T} and superintegrability.
For this, consider a connected compact  Lie group $G$ with Lie algebra $\fg$ and let
\be
\fg_n  < \fg_{n-1} < \cdots < \fg_1:= \fg
\ee
be a chain of subalgebras with associated connected subgroups $G_i < G_1 := G$ for $i =1,\dots, n$.
If $G$ acts with momentum map $\cJ$ on a connected symplectic manifold $M$, then also the subgroups
act with corresponding momentum maps $\cJ_i = \pi_i \circ \cJ$, where $\pi_i: \fg^* \to \fg_i^*$
is the projection dual to the inclusion $\fg_i \to \fg$.
Taken together, the Abelian Poisson subalgebras $\fH_i := \cJ_i^* ( C^\infty(\fg_i^*)^{G_i})$ of $C^\infty(M)$ generate a `big' Abelian subalgebra $\fH$,
which was Thimm's basic observation \cite{T}.
By combining  Lie theoretic (connectedness) results presented in \cite{Lane} with  a natural generalization of our proof of Theorem \ref{main},
it is not difficult to see that the `big'  $\fH$ and its
centralizer $\fF$ in $C^\infty(M)$ always satisfy the equality $\ddim(\fH) + \ddim(\fF) = \dim(M)$.
In general, this gives a superintegrable system, and  in very rare, celebrated cases \cite{GS2} a Liouville integrable one.
Of course, action variables are provided by the momentum map of the associated GS torus action for all these systems.

\bigskip
\subsubsection*{Acknowledgements}
I would like to thank  K.-H.~Neeb for very  helpful correspondence and useful suggestions.
I am also grateful to M.~Fairon for comments on the manuscript.
This  work was supported in part by the grant NKKP Advanced 152467.





\begin{thebibliography}{99}


    \setlength{\parskip}{0em}

\normalsize


 \bibitem{BJ}
 A.V.~Bolsinov and B.~Jovanovi\'{c},
 {\it Noncommutative integrability, moment map and geodesic flows}.
 Ann. Glob. Anal. and Geom. {\bf 23} (2003), 305-322;
 \href{ https://arxiv.org/abs/math-ph/0109031}{\tt arXiv:math-ph/0109031}


\bibitem{DK}
J.J.~Duistermaat and J.A.C.~Kolk,
Lie Groups. Universitext. Springer, 2000


\bibitem{FF}
L.~Feh\'er and M.~Fairon,
{\it  Integrable systems from Poisson reductions of generalized Hamiltonian torus actions.}
Nonlinearity {\bf 39} (2026) Article ID 075021;
 \href{https://arxiv.org/abs/2507.12051}{\tt arXiv:2507.12051}

\bibitem{FP}
L.~Feh\'er and B.G.~Pusztai,
{\it Spin Calogero models associated with Riemannian symmetric spaces of negative curvature.}
Nucl. Phys. B {\bf 751} (2006) 436-458;
\href{https://arxiv.org/abs/math-ph/0604073}{arXiv:math-ph/0604073}

\bibitem{GS1}
V.~Guillemin and S.~Sternberg,
{\it The moment map and collective motion.}
Ann. of Phys. {\bf 127} (1980) 220-253


\bibitem{GS2}
V.~Guillemin and S.~Sternberg,
{\it On collective complete integrability according to the method of Thimm.}
Ergod. Th. and Dynam. Syst. {\bf 3} (1983) 219-230


\bibitem{GS3}
V.~Guillemin and S.~Sternberg,
{\it The Gelfand-Cetlin system and quantization of complex flag manifolds.}
J. Funct. Anal. {\bf 52} (1983) 106-128

\bibitem{HNP}
J.~Hilgert, K.-H.~Neeb and W.~Plank,
{\it Symplectic convexity theorems and coadjoint orbits.}
Compositio Math. {\bf 94} (1994) 129–180



\bibitem{J}
B.~Jovanovi\'c,
{\it Symmetries and integrability}.
Publ. Institut Math. {\bf 49} (2008) 1-36;
 \href{https://arxiv.org/abs/0812.4398}{\tt arXiv:0812.4398}


\bibitem{Lane}
J.~Lane,
{\it Convexity and Thimm's trick.}
Transform. Groups {\bf 23} (2018) 963-987;
\href{https://arxiv.org/abs/1509.07356}{\tt arXiv:1509.07356}


\bibitem{LMTW}
E.~Lerman, E.~Meinrenkren, S.~Tolman and C.~Woodward,
{\it Non-Abelian convexity by symplectic cuts.}
Topology {\bf 37} (1998) 245-259;
\href{https://arxiv.org/abs/dg-ga/9603015}{arXiv: dg-ga/9603015}



\bibitem{Mi}
P.W.~Michor, Topics in Differential Geometry.  Amer. Math. Soc., 2008



\bibitem{MPW}
W.~Miller~Jr, S.~Post and P.~Winternitz,
{\it Classical and quantum superintegrability with applications}.
J. Phys. A: Math. Theor. {\bf 46} (2013) Article ID 423001;
\href{https://arxiv.org/abs/1309.2694}{\tt arXiv:1309.2694}


\bibitem{MF}
A.S.~Mischenko and A.T.~Fomenko,
{\it  Generalized Liouville method for integrating Hamiltonian systems}.
Funct. Anal. Appl. {\bf 12} (1978) 113-125

\bibitem{Nek}
N.N.~Nekhoroshev,
{\it  Action-angle variables and their generalizations}.
Trans. Moscow Math. Soc. {\bf 26} (1972) 180-197



\bibitem{OR}
J.-P.~Ortega and T.~Ratiu,
Momentum Maps and Hamiltonian Reduction. Birkh\"auser, 2004



\bibitem{R}
 N.~Reshetikhin,
{\it Degenerately integrable systems}.
J. Math. Sci. {\bf 213} (2016) 769-785;
\href{https://arxiv.org/abs/1509.00730}{\tt arXiv:1509.00730}


\bibitem{Sam}
H.~Samelson.
Notes on Lie algebras. Springer, 1990


\bibitem{T}
A.~Thimm,
{\it  Integrable geodesic flows on homogeneous spaces.}
Ergod. Th. and Dynam. Syst. {\bf 1} (1981) 495-517.

\bibitem{W}
C.~Woodward,
{\it Multiplicity free Hamiltonian actions need not be K\"ahler.}
Invent. Math. {\bf 131} (1998) 311–319;
\href{https://arxiv.org/abs/dg-ga/9506009}{arXiv: dg-ga/9506009}


\bibitem{Z}
 N.T.~Zung, {\it Torus actions and integrable systems.} pp. 289-328
 in: Topological Methods in the Theory of Integrable Systems. Camb. Sci. Publ., 2006;
 \href{https://arxiv.org/abs/math/0407455}{\tt arXiv:math/0407455}



\end{thebibliography}
\end{document}